\newcommand\tildeO{\tilde{\mathcal O}}
\newcommand\reals{\mathds{R}}
\newcommand\integers{\mathds{N}}
\newcommand\EXP{\mathds{E}}
\newcommand\PR{\mathds{P}}
\newcommand\IND{\mathds{1}}
\DeclareMathOperator\TR{Tr}
\DeclareMathOperator\norm{norm}
\DeclareMathOperator\ball{ball}
\newcommand*\TRANS{{\mathpalette\doTRANS\empty}}
\newcommand*\doTRANS[2]{\raisebox{\depth}{$\m@th#1\intercal$}}
\newtheorem{assumption}{Assumption}
\begin{document}

\title{A modified Thompson sampling-based learning algorithm for unknown linear systems}

\author{Mukul Gagrani, Sagar Sudhakara, Aditya Mahajan, Ashutosh Nayyar, and Yi Ouyang%
  \thanks{Mukul Gagrani is with Qualcomm AI research, San Diego. (email: mgagrani@qti.qualcomm.com)}%
  \thanks{Sagar Sudhakara and Ashutosh Nayyar are with the Department of Electrical and Computer Engineering, University of Southern California, Los Angeles, CA, USA. (email: sagarsud@usc.edu, ashutosn@usc.edu)}%
  \thanks{Aditya Mahajan is with the department of Electrical and Computer Engineering, McGill University, Montreal, QC, Canada. (email: aditya.mahajan@mcgill.ca)}%
  \thanks{Yi Ouyang is with Preferred Networks America, Burlingame, CA, USA (email: ouyangyi@preferred-america.com)}%
  \thanks{The work at USC was supported by NSF grants ECCS 1750041 and ECCS 2025732 and the work of Aditya Mahajan was supported in part by  the Innovation for Defence Excellence and Security (IDEaS) Program through grant CFPMN2-30.}%
}

\maketitle

\begin{abstract}
We revisit the Thompson sampling-based learning algorithm for controlling an unknown linear system with quadratic cost proposed in \cite{ouyang2019posterior}. This algorithm operates in episodes of dynamic length and it is shown to have a regret bound of $\tildeO(\sqrt{T})$, where $T$ is the time-horizon. The regret bound of this algorithm is obtained under a technical assumption on the induced norm of the closed loop system. We propose a variation of this algorithm that enforces a lower bound $T_{\min}$ on the episode length. We show that a careful choice of $T_{\min}$ (that depends on the uncertainty about the system model) allows us to recover the  $\tildeO(\sqrt{T})$ regret bound under a milder technical condition about the closed loop system.
\end{abstract}

\section{Introduction}

The problem of learning an optimal policy for a system with linear dynamics and quadratic cost with unknown parameters has received considerable attention in the literature. Historically, the focus has been on developing algorithms which asymptotically learn optimal policies using techniques from adaptive control and reinforcement learning~\cite{astrom1994adaptive,Caines:1988,bradtke1993reinforcement, bradtke1994adaptive}. In recent years, the emphasis has shifted towards developing algorithms with finite-time regret guarantees. 

Broadly speaking, three classes of learning algorithms have been considered in the literature: optimism in the face of uncertainty~\cite{campi1998adaptive,abbasi2011regret,cohen2019learning,abeille2020efficient}, certainty equivalence~\cite{dean2018regret,mania2019certainty,faradonbeh2020input,simchowitz2020naive,faradonbeh2020adaptive}, and Thompson sampling~\cite{ouyang2017control,abeille2018improved,faradonbeh2020adaptive,gagrani2021thompson}. These algorithms provide two kinds of regret guarantees: frequentist and Bayesian. In the frequentist setting, it is established that the regret for the unknown system is bounded with high probability (with respect to the distribution of the process noise and the randomness introduced by the algorithm). In the Bayesian setting, it is assumed that there is a prior on the unknown system parameters and it is established that the expected regret is bounded (where the expectation is with respect to the prior, the distribution of the process noise, and the randomness introduced by the algorithm). These two notions of regret are different and, since the per-step cost is not bounded, one form of the regret does not imply the other. 

In this paper, we revisit a recently proposed algorithm for establishing Bayesian regret called Thompson sampling with dynamic episodes (TSDE)~\cite{ouyang2019posterior}. 
The main result of~\cite{ouyang2019posterior} is to show that the Bayesian regret of TSDE accumulated up to time~$T$ is bounded by $\tildeO(\sqrt{T})$, where the $\tildeO(\cdot)$ notation hides constants and poly-logarithmic factors. This result was derived under a technical assumption on the induced norm of the closed loop system. In this paper, we present a variation of the TSDE algorithm and obtain a $\tildeO(\sqrt{T})$ bound on the Bayesian regret by imposing a much milder technical assumption.

\section{Model and problem formulation}

We consider the same model as~\cite{ouyang2019posterior}. For the sake of completeness, we present the model below.

Consider a linear system with state $x_t \in \reals^n$, control input $u_t \in \reals^m$, and disturbance $w_t \in \reals^n$. For the ease of exposition, we assume that the system starts from an initial state $x_1 = 0$. The state evolves over time according to 
\begin{equation}
  x_{t+1} = A x_t + B u_t + w_t, \quad t \ge 1, 
\end{equation}
where $A \in \reals^{n \times n}$ and $B \in \reals^{n \times m }$ are the system dynamics matrices. The noise $\{w_t\}_{t \ge 1}$ is an independent and identically distributed Gaussian process with $w_t \sim \mathcal{N}(0, \sigma_w^2 I)$.

\begin{remark}
  In~\cite{ouyang2019posterior}, it was assumed that $\sigma_w^2 = 1$. Using a general $\sigma_w^2 > 0$ does not fundamentally change any of the results or the proof arguments.
\end{remark}

At each time~$t$, the system incurs a per-step cost given by
\begin{equation}
  c(x_t,u_t) = x_t^\TRANS Q x_t + u_t^\TRANS R u_t,
\end{equation}
where $Q$ and $R$ are positive definite matrices. 

Let $\theta^\TRANS = [A, B]$ denote the parameters of the system.  $\theta \in \reals^{d \times n}$, where $d = n+m$. The performance of any policy $\pi = (\pi_1, \pi_2, \dots)$ is measured by the long-term average cost given by 
\begin{equation}
  J(\pi;\theta) = \limsup_{T \to \infty} \frac{1}{T}
  \EXP^\pi \Bigl[ \sum_{t=1}^T c(x_t, u_t) \Bigr].
\end{equation}
Let $J(\theta)$ denote the minimum of $J(\pi;\theta)$ over all policies. It is well known~\cite{Astrom1970} that if the pair $(A,B)$ is stabilizable, then $J(\theta)$ is given by
\[
  J(\theta) = \sigma_w^2 \TR(S(\theta)),
\]
where $S(\theta)$ is the unique positive semi-definite solution of the following Riccati equation:
\begin{multline}
  S(\theta) = Q + A^\TRANS S(\theta) A \\
  - A^\TRANS S(\theta) B( R + B^\TRANS S(\theta) B)^{-1} B^\TRANS S(\theta)A. \label{eq:ric_eq}
\end{multline}
Furthermore, the optimal control policy is given by
\begin{equation}
  u_t = G(\theta) x_t,
\end{equation}
where the gain matrix $G(\theta)$ is given by 
\begin{equation}\label{eq:ric_gain}
  G(\theta) = -(R + B^\TRANS S(\theta) B)^{-1} B^\TRANS S(\theta) A.
\end{equation}

As in~\cite{ouyang2019posterior}, we are interested in the setting where the system parameters  are unknown. We denote the unknown parameters by a random variable $\theta_1$ and assume that  there is a prior distribution on $\theta_1$. The Bayesian regret of a policy $\pi$ operating for horizon $T$ is defined by 
\begin{equation}
  R(T;\pi) = \EXP^\pi \Bigl[ \sum_{t=1}^T c(x_t, u_t) - T J(\theta_1) \Bigr],
\end{equation}
where the expectation is with respect to the prior on $\theta_1$, the noise processes, the initial conditions, and the potential randomizations done by the policy~$\pi$.

\section{Thomson sampling based learning algorithm}

As in~\cite{ouyang2019posterior}, we assume that the unknown model parameters $\theta$ lie in a compact subset $\Omega_1$ of $\reals^{d \times n}$. We use $p|\Omega$ to denote the restriction
of probability distribution $p$ on the set $\Omega$.
We assume that there is a prior $\mu_1$ on $\Omega_1$ which satisfies the following assumption.
\begin{assumption}\label{ass:prior}
  There exist $\hat \theta_1(i) \in \reals^{d}$ for $i \in \{1, \dots, n\}$ and a positive definite matrix $\Sigma_1 \in \reals^{d\times d}$ such that for any $\theta \in \reals^{d\times n}$, $\mu_1 = \bar \mu_1 \bigr|_{\Omega_1}$, where 
  \[
    \bar \mu_1(\theta) = \prod_{i=1}^n \bar \mu_1(\theta(i))
    \quad\text{and}\quad
    \bar \mu_1(\theta(i)) = \mathcal{N}(\hat \theta_1(i), \Sigma_1).
  \]
\end{assumption}

We maintain a posterior distribution $\mu_t$ on $\Omega_1$ based on the history $(x_{1:t-1}, u_{1:t-1})$ of the observations until time~$t$. From standard results in linear Gaussian regression~\cite{sternby1977consistency}, we know that the posterior is a truncated Gaussian distribution
\[
  \mu_t(\theta) = \biggl[ 
    \prod_{i=1}^n \bar \mu_t( \theta(i) ) 
  \biggr] \biggr|_{\Omega_1}
\]
where $\bar \mu_t(\theta(i)) = \mathcal{N}(\hat \theta_t(i), \Sigma_t)$ and $\{ \hat \theta_t(i) \}_{i=1}^n$ and $\Sigma_t$ can be updated recursively as follows:
\begin{align}
  \hat \theta_{t+1}(i) &= \hat \theta_t(i) + 
  \frac{\Sigma_t z_t ( x_{t+1}(i) - \hat \theta_t(i)^\TRANS z_t)}
  {\sigma_w^2 + z_t^\TRANS \Sigma_t z_t},
  \label{eq:theta}
  \\
  \Sigma_{t+1}^{-1} &= \Sigma_t^{-1} + \frac{1}{\sigma_w^2} z_t z_t^\TRANS,
  \label{eq:Sigma}
\end{align}
where $z_t = [x_t^\TRANS, u_t^\TRANS]^\TRANS$.

\subsection{Thompson sampling with dynamic episodes algorithm}

We now present a variation of the Thompson sampling with dynamic episodes (TSDE) algorithm of~\cite{ouyang2019posterior}. As the name suggests, the algorithm operates in episodes of dynamic length. The key difference from~\cite{ouyang2019posterior} is that we enforce that each episode is of a minimum length $T_{\min}$. The choice of $T_{\min}$ will be explained later. 

Let $t_k$ and $T_k$ denote the start time and the length of episode~$k$, respectively. Episode~$k$ has a minimum length of $T_{\min}$ and ends when the length of the episode is strictly larger than the length of the previous episode (i.e., $t - t_k > T_{k-1}$) or at the first time after $t_k + T_{\min}$ when the determinant of the covariance~$\Sigma_t$ falls below half of its value at time $t_k$, i.e., $\det \Sigma_t < \tfrac 12 \det \Sigma_{t_k}$. Thus, 
\begin{equation}\label{eq:stopping}
    t_{k+1} = \min \left\{ t > t_k + T_{\min} 
      \,\middle|\, \begin{lgathered}
        t - t_k > T_{k-1} \text{ or } \\
      \det \Sigma_t < \tfrac 12 \det \Sigma_{t_k} 
  \end{lgathered} \right\}.
\end{equation}

Note that  the stopping condition~\eqref{eq:stopping} implies that
\begin{equation}
  T_{\min} + 1
  \le 
  T_k 
  \le 
  T_{k-1} +1 
  , \quad \forall k
  \label{eq:min-length}
\end{equation}
If we select $T_{\min} = 0$ in the above algorithm, we recover the stopping condition of~\cite{ouyang2019posterior}. 

The TSDE algorithm works as follows. At the beginning of episode~$k$, a parameter $\bar \theta_k$ is sampled from the posterior distribution $\mu_{t_k}$. During the episode, the control inputs are generated using the sampled parameters $\bar \theta_k$, i.e., 
\begin{equation}
  u_t = G(\bar \theta_k) x_t, \quad
  t_k \le t < t_{k+1}.
\end{equation}
The complete algorithm is presented in Algorithm~\ref{alg:TSDE}. 

\begin{algorithm}[!t]
\caption{\texttt{TSDE}}
\label{alg:TSDE}
\begin{algorithmic}[1]
  \State \textbf{input:} $\Omega_1$, $\hat \theta_1$, $\Sigma_1$
  \State \textbf{initialization:} $t \gets 1$, $t_0 \gets -T_{\min}$, $T_{-1}
  \gets T_{\min}$, $k \gets 0$.
\For{$t = 1, 2, \dots $}
  \State observe $x_t$
  \State update $\bar \mu_t$ according to~\eqref{eq:theta}--\eqref{eq:Sigma}
  \State \textbf{if} $(t - t_k > T_{\min})$ and 
  \State \null \quad $\bigl( (t-t_k > T_{k-1}) \hbox{ or } (\det \Sigma_t <
  \tfrac12 \det \Sigma_{t_k}) \bigr)$ 
  \State \textbf{then} 
  \State \null \quad $T_k \gets t - t_k$, $k \gets k + 1$, $t_k \gets t$
  \State \null \quad sample $\bar \theta_k \sim \mu_t$
  \State \textbf{end if}
  \State Apply control $u_t = G(\bar \theta_k) x_t$
\EndFor
\end{algorithmic}
\end{algorithm}
  
\subsection{A technical assumption and the choice of minimum episode length}
For each $\theta^\TRANS = [A, B]$, we define a 4-dimensional row-vector $\eta(\theta)$ as follows:
\begin{equation}
    \eta(\theta) := (\|\theta\|, \TR(S(\theta)), \|S(\theta)\|, \|[I, G(\theta)^\TRANS]^\TRANS \|),
\end{equation}
where $S(\theta)$ is the solution of the Riccati equation in \eqref{eq:ric_eq}  and $G(\theta)$ is the optimal gain matrix  defined in \eqref{eq:ric_gain}. 
\begin{definition}\label{defn:type}
Let $M_{\theta}, M_J, M_S, M_G, \alpha, \delta$ be positive constants such that $\alpha \geq 1$ and $0 < \delta <1$. We say that the uncertainty set $\Omega_1$ is of \emph{Type} $(M_{\theta}, M_J, M_S, M_G, \alpha, \delta)$ if the following conditions hold:
\begin{enumerate}
    \item For all $\theta \in \Omega_1$, 
    \begin{align}\label{defn:defn1}
        \eta(\theta) \leq (M_{\theta}, M_J, M_S, M_G)
    \end{align}
where the inequality is component-wise.
    \item For any $\theta, \phi \in \Omega_1$ with $\theta^\TRANS =
  [A_\theta, B_\theta]$ and for any integer $t \ge 1$, 
  \[
    \| (A_\theta + B_\theta G(\phi))^t \| \le \alpha\delta^t.
  \]
\end{enumerate}
\end{definition}

\begin{assumption}\label{ass:type}
  We assume that the uncertainty set $\Omega_1$ is of \emph{Type} $(M_{\theta}, M_J, M_S, M_G, \alpha, \delta)$, where $M_{\theta}$, $M_J$, $M_S$, $M_G$, $\alpha$, $\delta$ are positive constants such that $\alpha \geq 1$ and $0 < \delta <1$.
\end{assumption}
The following simple observation plays a critical role in analyzing the regret of TSDE.
\begin{lemma}\label{lem:t-min}
   Suppose Assumption \ref{ass:type} is true. Define
   \begin{equation}\label{eq:t-min}
    T^*_{\min} =  
      \biggl\lceil \frac{ \log \alpha }{ - \log \delta } \biggr\rceil.
  \end{equation}
  Then, for 
  $\theta, \phi \in \Omega_1$ with $\theta^\TRANS = [A_\theta, B_\theta]$, we
  have 
  \begin{equation}\label{eq:bound}
    \| (A_\theta + B_\theta G(\phi))^{T^*_{\min} + 1} \| < 1
  \end{equation}
\end{lemma}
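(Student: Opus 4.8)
The plan is to simply invoke Lemma~\ref{lem:bound} and then solve the resulting scalar inequality for $\tau$. First I would fix $\varepsilon \in (0,1-\delta)$ and take the corresponding $\alpha \ge 1$ given by Lemma~\ref{lem:bound}, so that for every $\theta,\phi \in \Omega_1$ (with $\theta^\TRANS = [A_\theta,B_\theta]$) and every integer $t \ge 1$,
\[
  \| (A_\theta + B_\theta G(\phi))^t \| \le \alpha(\varepsilon+\delta)^t .
\]
Since it suffices to make the right-hand side at most $1$, the whole claim reduces to identifying those integers $t$ for which $\alpha(\varepsilon+\delta)^t \le 1$.

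Next I would take logarithms, being careful with signs. Because $\varepsilon < 1-\delta$, we have $\varepsilon+\delta \in (0,1)$, hence $\log(\varepsilon+\delta) < 0$ and $-\log(\varepsilon+\delta) > 0$; also $\alpha \ge 1$ gives $\log\alpha \ge 0$, so $T^*_{\min}$ is a well-defined nonnegative integer (equal to $0$ precisely when $\alpha = 1$). The inequality $\alpha(\varepsilon+\delta)^t \le 1$ is equivalent to $\log\alpha + t\log(\varepsilon+\delta) \le 0$, i.e.\ to $\log\alpha \le t\,(-\log(\varepsilon+\delta))$, i.e.\ (dividing by the positive number $-\log(\varepsilon+\delta)$) to $t \ge \log\alpha/(-\log(\varepsilon+\delta))$. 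Therefore any integer $\tau$ with $\tau \ge \lceil \log\alpha/(-\log(\varepsilon+\delta))\rceil = T^*_{\min}$ satisfies $\alpha(\varepsilon+\delta)^\tau \le 1$, and combining this with the displayed bound from Lemma~\ref{lem:bound} gives $\| (A_\theta + B_\theta G(\phi))^{\tau} \| \le 1$, which is exactly~\eqref{eq:bound}.

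The argument has essentially no obstacle; the only thing to watch is the sign bookkeeping when taking the logarithm of a number in $(0,1)$, which is why $T^*_{\min}$ in~\eqref{eq:t-min} is written with $-\log(\varepsilon+\delta)$ in the denominator. I would also add a one-line remark on why the conclusion is stated as $\le 1$ rather than as a strict contraction: choosing $T_{\min} = T^*_{\min}$ in the TSDE stopping rule~\eqref{eq:stopping} then guarantees that products of closed-loop matrices accumulated over any completed episode stay bounded, which is the property the regret analysis relies on.
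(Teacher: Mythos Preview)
Your proposal is correct and is exactly the argument the paper has in mind: the paper's own proof is a one-liner stating that the result ``follows immediately from the choice of $T_{\min}^*$ [and] Lemma~\ref{lem:bound},'' and you have simply spelled out the logarithm/sign bookkeeping that makes this immediate. There is no meaningful difference in approach.
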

\begin{proof}
  The proof follows immediately from Assumption~\ref{ass:type} and the definition of $T^*_{\min}$.
\end{proof}

Before presenting our regret analysis under Assumption~\ref{ass:type}, we present two  special cases of this assumption. The first case is identical to the assumption made in \cite{ouyang2019posterior} about the uncertainty set.

\begin{assumption}\label{ass:norm}
  Let $M_{\theta}$, $M_J$, $M_S$, $M_G$, $\alpha$, $\delta$ be positive constants such that $\alpha \geq 1$ and $0 < \delta <1$.
  Assume that the uncertainty set $\Omega_1$  satisfies the following conditions:
\begin{enumerate}
    \item Equation \eqref{defn:defn1} holds for all $\theta \in \Omega_1$.
    \item For any $\theta, \phi \in \Omega_1$ with $\theta^\TRANS =
  [A_\theta, B_\theta]$,
  \[
    \| (A_\theta + B_\theta G(\phi)) \| \le \delta.
  \]
\end{enumerate}
\end{assumption}

In \cite{ouyang2019posterior}, part 2) of Assumption \ref{ass:norm} was stated explicitly. In addition, the uncertainty set was assumed to be compact, which ensures part 1) of Assumption \ref{ass:norm}.  

An uncertainty set that satisfies Assumption \ref{ass:norm} also satisfies Assumption \ref{ass:type} with $\alpha =1$. This is because if $\| (A_\theta + B_\theta G(\phi)) \| \le \delta$, then
\begin{align}
    \| (A_\theta + B_\theta G(\phi))^t \| \le \left(\| (A_\theta + B_\theta G(\phi)) \|\right)^t \le \delta^t.
\end{align}
Thus, Assumption \ref{ass:type} is weaker than Assumption \ref{ass:norm} used in \cite{ouyang2019posterior}.

For a square matrix $A$, let $\rho(A)$ denote the spectral radius of matrix $A$.The next assumption can also be viewed as a special case of Assumption \ref{ass:type}. 

\begin{assumption}\label{ass:radius}
  Let $M_{\theta}$, $M_J$, $M_S$, $M_G$, $\alpha$, $\tilde\delta$ be positive constants  such that $\alpha \geq 1$ and $0 < \tilde\delta <1$.
  Assume that the uncertainty set $\Omega_1$  satisfies the following conditions:
\begin{enumerate}
    \item Equation \eqref{defn:defn1} holds for all $\theta \in \Omega_1$.
    \item For any $\theta, \phi \in \Omega_1$ with $\theta^\TRANS =
  [A_\theta, B_\theta]$,
  \[
   \rho(A_\theta + B_\theta G(\phi)) \le \tilde\delta.
  \]
\end{enumerate}
\end{assumption}

We note that Assumption~\ref{ass:radius} is weaker than Assumption \ref{ass:norm} (since $\rho(A) \leq ||A||$ for any matrix $A$).
Consider, for example, a family of matrices $A_q = \begin{bmatrix}
 \tilde\delta & q \\
 0 & \tilde\delta
 \end{bmatrix}$, where $q \in \integers $ and $0 < \tilde\delta < 1$. For each $q$, the spectral radius of $A_q$ is $\tilde\delta$ while its norm is at least $q$. Thus, each $A_q$ satisfies Assumption~\ref{ass:radius} but not Assumption~\ref{ass:norm}.

The following lemma shows that an uncertainty set that satisfies Assumption \ref{ass:radius} also satisfies Assumption \ref{ass:type} for some constants $\alpha \ge 1$ and $\tilde\delta < \delta < 1$.

\begin{lemma}\label{lem:bound}
  Suppose Assumption \ref{ass:radius} is true. Then,  there exist $\alpha \geq 1$ and $\tilde\delta < \delta < 1$ such that for any $\theta, \phi \in \Omega_1$ with $\theta^\TRANS =   [A_\theta, B_\theta]$ and for any integer $t \ge 1$,  
  \[
    \| (A_\theta + B_\theta G(\phi))^t \| \le \alpha\delta^t.
  \]
  \end{lemma}

\begin{proof}
   Define $\varepsilon= \delta -\tilde\delta$.
  Let $\mathcal{L} = \{ A_{ \theta} + B_{ \theta} G( \phi) :  \theta,  \phi \in  \Omega_1 \}$. Since $\Omega_1$ is compact, so is $\mathcal{L}$. Now for any $L \in \mathcal{L}$, there exists a norm (call it $\norm_L$) such that $\norm_L(L) < \rho(L) + \varepsilon \le \tilde\delta + \varepsilon = \delta$. 

  Since norms are continuous, there is an open ball centered at $L$ (let's call this $\ball_L$) such that for any $H \in \ball_L$, we have $\norm_L(H) < \delta$. Consider the collection of open balls $\{ \ball_L : L \in \mathcal{L} \}$. This is an open cover of compact set $\mathcal{L}$. So, there is a finite sub-cover. Let's denote this sub-cover by $\ball_{L_1}, \dots, \ball_{L_{\ell}}$. By equivalence of norms, there is a finite constant $\alpha_k$ such that $\| A \| \le \alpha_{L_k} \norm_{L_k}(A)$ for any matrix $A$, for all $k \in \{1, \dots, \ell\}$. Let $\alpha = \max(1, \max_k \alpha_{L_k})$. 

  Now consider an arbitrary $H \in \mathcal{L}$. It belongs to $\ball_{L_k}$ for some $k \in \{1, \dots, \ell\}$. Therefore, $\norm_{L_k}(H) < \delta $. Hence, for any integer $t$, the above inequalities and the submulitplicity of norms give that $\| H^t \| \le \alpha_{L_k} \norm_{L_k}(H^t) \le \alpha (\norm_{L_k}(H))^t < \alpha(\delta)^t$. 
\end{proof}
\begin{remark}
  Condition~2 in Definition~\ref{defn:type} states that $A_\theta + B_\theta G(\phi)$ is \emph{uniformly} exponentially stable for all all $\theta, \phi \in \Omega_1$. Condition~2 of Assumption~\ref{ass:radius} states that $A_\theta + B_\theta G(\phi)$ is \emph{uniformly} asymptotically stable for all $\theta,\phi \in \Omega_1$. For linear systems, asymptotic stability implies exponential stability. In Lemma~\ref{lem:bound}, we are effectively showing that when the uncertainty set is compact, uniform asymptotic stability implies uniform exponential stability.
\end{remark}

\subsection{Regret bounds}

The following result provides an upper bound on the regret of the proposed algorithm.

\begin{theorem}\label{thm:regret}
Under Assumptions~\ref{ass:prior} and~\ref{ass:type} {and with $T_{\min} \geq T^*_{\min}$}, the regret of TSDE is upper bounded by 
  \begin{equation}
    R(T; {\tt TSDE}) \le \tildeO(\sigma_w^2 (n+m) \sqrt{nT}).
  \end{equation}
\end{theorem}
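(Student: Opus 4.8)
The plan is to follow the regret-decomposition strategy of Ouyang et al.~\cite{ouyang2019posterior}, showing that the only place their argument used the induced-norm Assumption~\ref{ass:norm} is in bounding the growth of the state, and that Lemma~\ref{lem:t-min} together with the enforced minimum episode length $T_{\min}\ge T^*_{\min}$ supplies exactly the substitute we need. First I would set up the standard machinery: let $K_T$ denote the (random) number of episodes completed by time $T$, and decompose the regret as $R(T;{\tt TSDE}) = \EXP\bigl[\sum_{t=1}^T c(x_t,u_t) - T J(\theta_1)\bigr]$ into three terms $R_0 + R_1 + R_2$, where $R_0$ comes from the Bellman/Riccati relation for the sampled parameters, $R_1 = \EXP[\sum_k (T_k) (J(\bar\theta_k) - J(\theta_1))]$ is controlled by the key property of posterior sampling (conditioned on the history up to $t_k$, $\bar\theta_k$ and $\theta_1$ have the same law, so each summand has zero mean up to the $T_k$ weighting), and $R_2$ collects the terms $\EXP[\sum_k \sum_{t=t_k}^{t_{k+1}-1} (x_{t+1}^\TRANS S(\bar\theta_k) x_{t+1} - (A_{\bar\theta_k}x_t + B_{\bar\theta_k}u_t)^\TRANS S(\bar\theta_k)(A_{\bar\theta_k}x_t+B_{\bar\theta_k}u_t))]$ that telescopes within an episode but leaves boundary terms $\EXP[x_{t_{k+1}}^\TRANS(S(\bar\theta_k)-S(\bar\theta_{k+1}))x_{t_{k+1}}]$ plus the parameter-mismatch terms.

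The second block of steps is to bound $\EXP[K_T]$ and $\EXP[\sum_k T_k^2]$. The episode-length condition~\eqref{eq:min-length} gives $T_k \le T_{k-1}+1$, so a Cauchy–Schwarz/telescoping argument as in~\cite{ouyang2019posterior} yields $K_T = \tildeO(\sqrt{nT})$ using the determinant-halving trigger (which can fire at most $O(n\log T)$ times since $\det\Sigma_t$ is nonincreasing and bounded below), and the ``$t-t_k>T_{k-1}$'' trigger contributes the dominant $\sqrt{nT}$ count. The only genuinely new ingredient is the a priori bound on the state norm. This is where I expect the main obstacle: in~\cite{ouyang2019posterior} one has $\|x_{t+1}\| \le \delta\|x_t\| + \|w_t\|$ pathwise because $\|A_{\bar\theta_k}+B_{\bar\theta_k}G(\bar\theta_k)\|\le\delta<1$, giving a uniform geometric bound $\EXP[\|x_t\|^2] = O(\sigma_w^2 n)$. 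Under the weaker Assumption~\ref{ass:radius} the one-step map need not be a contraction in the Euclidean norm, so I would instead argue over blocks of length $T_{\min}$: since within episode $k$ the closed-loop matrix is constant, $x_{t_k + j T_{\min}}$ is obtained by composing $j$ copies of $(A_{\bar\theta_k}+B_{\bar\theta_k}G(\bar\theta_k))^{T_{\min}}$, each of operator norm $\le 1$ by Lemma~\ref{lem:t-min}, driven by accumulated noise; combined with the (at most linear in $T_{\min}$) growth within a sub-block and the fact that $T_{\min}$ is a fixed constant, this yields $\EXP[\|x_t\|^2] = O(\sigma_w^2 n\, \mathrm{poly}(T_{\min}))$, i.e.\ the same $\sigma_w^2 n$ scaling up to a constant absorbed into $\tildeO(\cdot)$. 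A clean way to package this is to note that for $\tau\ge T^*_{\min}$, $\|(A_\theta+B_\theta G(\phi))^\tau\|\le1$ implies that over any window of $T_{\min}$ steps the homogeneous part is nonexpansive, so the state is stochastically dominated by a stable linear recursion on the subsampled chain.

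Once the uniform second-moment bound $\EXP[\|x_t\|^2\mid \text{history}]\le X^*$ for a constant $X^* = O(\sigma_w^2 n)$ is in hand, the remaining steps are the same as in~\cite{ouyang2019posterior}: $R_1$ is bounded using $|J(\bar\theta_k)-J(\theta_1)|\le$ const and $\EXP[\sum_k T_k] = T$ together with the zero-conditional-mean property and $\EXP[K_T]=\tildeO(\sqrt{nT})$, giving $R_1 = \tildeO(\sqrt{nT})$; the boundary terms in $R_2$ are bounded by $X^*\cdot \EXP[K_T]\cdot\max_{\theta}\TR S(\theta) = \tildeO(\sigma_w^2 n\sqrt{nT})$; and the parameter-mismatch term in $R_2$ is handled by Cauchy–Schwarz against $\sum_t z_t^\TRANS\Sigma_t z_t$, whose expectation is $\tildeO(\sigma_w^2(n+m))$ by the standard self-normalized/elliptical-potential argument on~\eqref{eq:Sigma}, yielding $\tildeO(\sigma_w^2(n+m)\sqrt{nT})$ after combining with $\EXP[K_T]$. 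Summing the three contributions gives $R(T;{\tt TSDE})\le\tildeO(\sigma_w^2(n+m)\sqrt{nT})$. The upshot is that the proof is essentially a transcription of~\cite{ouyang2019posterior} with one surgical replacement: every appeal to the norm bound $\|A_\theta+B_\theta G(\phi)\|\le\delta$ is rerouted through Lemma~\ref{lem:t-min} applied on episodes of length at least $T_{\min}\ge T^*_{\min}$, and I would flag this single substitution as the crux of the argument.
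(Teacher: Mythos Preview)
Your high-level plan is right and matches the paper: decompose regret via the Bellman identity for the sampled parameter, control the three resulting pieces, and replace the pathwise one-step contraction of~\cite{ouyang2019posterior} by an episode-level contraction coming from Lemma~\ref{lem:t-min} and the constraint $T_k\ge T_{\min}+1$. Your sketch of the state bound is essentially the paper's Lemma~\ref{lem:Xq}: since the closed-loop matrix is constant within episode~$k$ and each episode has length at least $T_{\min}+1$, one obtains $\|x_{t_{k+1}}\|\le\beta\|x_{t_k}\|+\bar\alpha W_T$ with $\beta<1$, hence $X_T\le\alpha_0 W_T$ pathwise.

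However, there is a genuine gap in your claim that ``the proof is essentially a transcription of~\cite{ouyang2019posterior} with one surgical replacement.'' The enforced minimum episode length breaks a second ingredient of the original proof, namely the bound on the parameter-mismatch term. In~\cite{ouyang2019posterior} the Cauchy--Schwarz split uses $\Sigma_t$ and then relies on the fact that, within an episode, $\det\Sigma_t^{-1}/\det\Sigma_{t_k}^{-1}\le 2$ because the determinant trigger has not fired. With the modified algorithm that trigger is \emph{suppressed} for the first $T_{\min}$ steps of every episode, so during that window $\det\Sigma_t^{-1}/\det\Sigma_{t_k}^{-1}$ can be arbitrarily large (of order $(1+cX_T^2)^{T_{\min}}$). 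Your proposal, which invokes ``Cauchy--Schwarz against $\sum_t z_t^\TRANS\Sigma_t z_t$'' and otherwise transcribes~\cite{ouyang2019posterior}, does not address this. The paper fixes it by performing the Cauchy--Schwarz split with $\Sigma_{t_k}$ rather than $\Sigma_t$ (so that the $\|\Sigma_{t_k}^{-0.5}(\theta_1-\bar\theta_k)\|^2$ factor is constant over the episode and directly controlled by posterior sampling), and then absorbs the determinant-ratio blow-up into the $z_t^\TRANS\Sigma_{t_k} z_t$ factor via the coarse bound $\det\Sigma_t^{-1}/\det\Sigma_{t_k}^{-1}\le\bigl(2+\tfrac{M_G^2 X_T^2}{\lambda_{\min}\sigma_w^2}\bigr)^{T_{\min}\vee1}$; this extra polynomial in $X_T$ is then handled by the moment bound of Lemma~\ref{lem:Xq}.

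A smaller point: the determinant-halving count is $O((n+m)\log T)$, not $O(n\log T)$, since $\Sigma_t\in\reals^{d\times d}$ with $d=n+m$; this is what produces the $(n+m)$ factor in $K_T\le\tildeO(\sqrt{(n+m)T})$ and ultimately in the final bound.
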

The proof is presented in the next section.
\begin{remark}
 The constants hidden in the $\tildeO(\cdot)$ notation in Theorem~\ref{thm:regret} depend only on the type  $(M_{\theta}, M_J, M_S, M_G, \alpha, \delta)$ of the uncertainty set $\Omega_1$. In particular, these constants do not depend on $n$, $m$, and $T$.
\end{remark}
\section{Regret analysis}

For the ease of notation, we use $R(T)$ instead of $R(T; \texttt{TSDE})$ in this section. Let $K_T$ denote the number of episodes until horizon~$T$. 
Following the exact same steps as~\cite{ouyang2019posterior}, we can show that
\begin{equation}\label{eq:regret-decompose}
  R(T) = R_0(T) + R_1(T) + R_2(T)
\end{equation}
where
\begin{align}
  R_0(T) &= \EXP\bigg[ \sum_{k=1}^{K_T} T_k J(\bar \theta_k) \biggr] - T \EXP[ J(\theta_1) ],
  \\
  R_1(T) &= \EXP\Bigg[ \sum_{k=1}^{K_T} \sum_{t = t_k}^{t_{k+1} - 1}
    \bigl[
      x_t^\TRANS S(\bar \theta_k) x_t - x_{t+1}^\TRANS S(\bar \theta_k) x_{t+1} 
  \bigr] \Biggr]
  \\
  R_2(T) &=  \EXP\Bigg[ \sum_{k=1}^{K_T} \sum_{t = t_k}^{t_{k+1} - 1} 
    \bigl[ (\theta_1^\TRANS z_t)^\TRANS S(\bar \theta_k) \theta_1^\TRANS z_t
      \notag \\[-12pt]
  & \hskip 10em -
  (\theta_k^\TRANS z_t)^\TRANS S(\bar \theta_k) \theta_k^\TRANS z_t
\bigr] \Biggr]
\end{align}
We establish the bound on $R(T)$ by individually bounding $R_0(T)$, $R_1(T)$,
and $R_2(T)$. 

\begin{lemma}\label{lem:regret}
  The terms in~\eqref{eq:regret-decompose} are bounded as follows:
  \begin{enumerate}
    \item $R_0(T) \le \tildeO(\sigma_w^2 \sqrt{(n+m) T})$.
      \vskip 0.5\baselineskip
    \item $R_1(T) \le \tildeO(\sigma_w^2 \sqrt{(n+m) T})$.
      \vskip 0.5\baselineskip
    \item $R_2(T) \le \tildeO( \sigma_w^2 (n+m) \sqrt{n T})$.
  \end{enumerate}
\end{lemma}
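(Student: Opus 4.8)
The plan is to follow the three-term decomposition of the regret and bound each piece separately, reusing the structure of the argument in~\cite{ouyang2019posterior} but inserting the new ingredient — the uniform bound $\|(A_\theta+B_\theta G(\phi))^\tau\|\le 1$ for $\tau\ge T^*_{\min}$ from Lemma~\ref{lem:t-min} — wherever the old proof relied on Assumption~\ref{ass:norm}. Before touching $R_0,R_1,R_2$ I would first establish two auxiliary facts that everything else rests on: (i) a bound on the number of episodes $K_T$, and (ii) a bound on the state growth $\EXP[\|x_t\|^2]$ (and on $\sum_t \EXP[\|x_t\|^2]$). For (i), the two triggers in~\eqref{eq:stopping} give $K_T = \tildeO(\sqrt{(n+m)T})$ by the standard argument: the "doubling-length" trigger $t-t_k>T_{k-1}$ alone would force $K_T=O(\sqrt T)$ via~\eqref{eq:min-length}, and the determinant trigger can fire at most $O((n+m)\log T)$ times because $\det\Sigma_t$ is nonincreasing, halves each time that trigger fires, and is polynomially bounded below (from~\eqref{eq:Sigma} and compactness of $\Omega_1$). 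For (ii), the key point is that within an episode the closed-loop matrix is $A_{\bar\theta_k}+B_{\bar\theta_k}G(\bar\theta_k)$ applied to the true dynamics through $\theta_1$; one writes $x_{t+1}=(A_1+B_1 G(\bar\theta_k))x_t+w_t$ and, since $\theta_1,\bar\theta_k\in\Omega_1$, iterates using Lemma~\ref{lem:bound} to get geometric decay at rate $\varepsilon+\delta<1$, yielding $\EXP[\|x_t\|^2]=O(\sigma_w^2 n)$ uniformly in $t$ and hence $\sum_{t=1}^T\EXP[\|x_t\|^2]=O(\sigma_w^2 n T)$.

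With these in hand, the three bounds go as follows. For $R_0(T)$: by construction $\bar\theta_k\sim\mu_{t_k}$ has the same conditional law as $\theta_1$ at the start of each episode, so the TSDE sampling identity $\EXP[T_k J(\bar\theta_k)]=\EXP[T_k J(\theta_1)]$ would hold if $T_k$ were $\mathcal F_{t_k}$-measurable; it is not, because $t_{k+1}$ depends on future covariances, but following~\cite{ouyang2019posterior} one bounds $R_0(T)\le \EXP[K_T]\cdot\max_\theta J(\theta)$ using $T_k\le T_{k-1}+1$ from~\eqref{eq:min-length}, which telescopes into a bound proportional to $\EXP[K_T]$ times $\tildeO(\sigma_w^2)$ (since $J(\theta)=\sigma_w^2\TR(S(\theta))$ and $S$ is bounded on the compact $\Omega_1$), giving $R_0(T)\le\tildeO(\sigma_w^2\sqrt{(n+m)T})$. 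For $R_1(T)$: the inner sum telescopes to $\EXP[\sum_k (x_{t_k}^\TRANS S(\bar\theta_k)x_{t_k}-x_{t_{k+1}}^\TRANS S(\bar\theta_k)x_{t_{k+1}})]$, and each term is bounded by $\|S(\bar\theta_k)\|\cdot\EXP[\|x_{t_k}\|^2]$ (the negative term is dropped), so $R_1(T)\le \tildeO(\sigma_w^2)\cdot\EXP[K_T]\cdot O(\sigma_w^2 n)$ — wait, I should be careful: the clean way is $R_1(T)\le \EXP[\sum_k \|S(\bar\theta_k)\|\,\|x_{t_k}\|^2]$, bounded using $\|S\|=O(1)$ on $\Omega_1$ and the state bound, then Cauchy–Schwarz against $\EXP[K_T]$, yielding $\tildeO(\sigma_w^2\sqrt{(n+m)T})$. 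For $R_2(T)$: this is the dominant term; write the summand as $z_t^\TRANS(\theta_1 S(\bar\theta_k)\theta_1^\TRANS-\bar\theta_k S(\bar\theta_k)\bar\theta_k^\TRANS)z_t$, bound it by $\|S(\bar\theta_k)\|(\|\theta_1^\TRANS z_t\|+\|\bar\theta_k^\TRANS z_t\|)\|(\theta_1-\bar\theta_k)^\TRANS z_t\|$, and control $\|(\theta_1-\bar\theta_k)^\TRANS z_t\|$ by $\|(\theta_1-\bar\theta_k)^\TRANS\Sigma_{t_k}^{-1/2}\|\cdot\|\Sigma_{t_k}^{1/2}z_t\|$; the first factor is $\tildeO(1)$ with high probability by concentration of the posterior (both $\theta_1$ and $\bar\theta_k$ being drawn from $\mu_{t_k}$), and $\sum_t \|\Sigma_{t}^{1/2}z_t\|^2$ telescopes via~\eqref{eq:Sigma} to $\tildeO((n+m)\log T)$, so an application of Cauchy–Schwarz over $t$ combined with $\sum_t\EXP\|z_t\|^2=O(\sigma_w^2 nT)$ produces $R_2(T)\le\tildeO(\sigma_w^2(n+m)\sqrt{nT})$.

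The main obstacle — and the only place the relaxation from Assumption~\ref{ass:norm} to Assumption~\ref{ass:radius} actually bites — is the state-growth bound (ii) and, more subtly, bounding $\|x_t\|$ \emph{uniformly over $t$ within an episode} rather than only in expectation. Under the old induced-norm assumption one had $\|x_{t+1}\|\le\delta\|x_t\|+\|w_t\|$, a one-step contraction that makes $\sup_t\|x_t\|$ trivial to control; under the spectral-radius assumption the closed-loop map is only eventually contracting, contracting after $T^*_{\min}$ steps with the constant $\alpha$ of Lemma~\ref{lem:bound} in front. This is exactly why the modified algorithm forces every episode to have length at least $T_{\min}\ge T^*_{\min}$: it guarantees that when we concatenate the per-episode dynamics, each episode's transition matrix over its \emph{full} length has norm $\le 1$ by Lemma~\ref{lem:t-min}, so the product over episodes does not blow up and the $w_t$ contributions can be summed geometrically. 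I would therefore spend the bulk of the argument carefully writing $x_t$ as a sum over past noise terms grouped by episode, applying Lemma~\ref{lem:bound} within the current episode and Lemma~\ref{lem:t-min} across completed episodes, to get $\EXP[\|x_t\|^2]\le c\,\sigma_w^2 n$ for a constant $c$ depending only on $\alpha,\varepsilon,\delta$; once that is in place, the remaining estimates for $R_0,R_1,R_2$ are essentially identical to~\cite{ouyang2019posterior}.
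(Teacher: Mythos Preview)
Your high-level plan is aligned with the paper's, and your diagnosis that the state-growth bound is where the relaxation from Assumption~\ref{ass:norm} to Assumption~\ref{ass:radius} bites is correct. But you have missed the \emph{second} place where the modification breaks the original argument, and your final sentence---``once that is in place, the remaining estimates for $R_0,R_1,R_2$ are essentially identical to~\cite{ouyang2019posterior}''---is false for $R_2$. In your Cauchy--Schwarz split you (correctly) introduce $\|\Sigma_{t_k}^{1/2}z_t\|$, then silently switch to $\sum_t\|\Sigma_t^{1/2}z_t\|^2$ for the telescoping step. The passage from $z_t^\TRANS\Sigma_{t_k}z_t$ to $z_t^\TRANS\Sigma_t z_t$ costs the ratio $\det\Sigma_t^{-1}/\det\Sigma_{t_k}^{-1}$, and in~\cite{ouyang2019posterior} this ratio is at most~$2$ because the determinant trigger ends the episode as soon as it is violated. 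Here the determinant trigger is \emph{suppressed} for the first $T_{\min}$ steps of every episode, so during that window $\det\Sigma_t^{-1}/\det\Sigma_{t_k}^{-1}$ can be of order $(1+M_G^2X_T^2/\lambda_{\min}\sigma_w^2)^{T_{\min}}$. The paper handles this with a dedicated estimate (Lemma~\ref{lem:det-ratio}) that bounds the ratio by this polynomial in $X_T$, and then absorbs the extra $X_T$-powers via the moment bound $\EXP[X_T^q]\le\mathcal{O}(\sigma_w^q\log T)$. Without this step your $R_2$ argument does not close.

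Two smaller points. First, the paper works with the single random variable $X_T=\max_{t\le T}\|x_t\|$ and bounds $\EXP[X_T^q]$ (Lemma~\ref{lem:Xq}), rather than a uniform-in-$t$ bound on $\EXP[\|x_t\|^2]$; this is not merely cosmetic, because $K_T$ is random and the $R_1$ and $R_2$ bounds are of the form $\EXP[K_T X_T^2]$ and $\EXP[X_T^p\cdot(\text{something})]$, which a pointwise second-moment bound does not directly control. Second, your claim ``norm $\le 1$ across episodes, hence the $w_t$ contributions can be summed geometrically'' needs the \emph{strict} inequality: one uses $T_k\ge T_{\min}+1$ together with $\alpha(\varepsilon+\delta)^{T^*_{\min}}\le 1$ to get $\alpha(\varepsilon+\delta)^{T_k}\le(\varepsilon+\delta)<1$, i.e.\ a genuine per-episode contraction, not just non-expansion; the paper's proof of Lemma~\ref{lem:Xq} makes this explicit.
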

Combining Lemma~\ref{lem:regret} with equation~\eqref{eq:regret-decompose} establishes Theorem~\ref{thm:regret}.
Before presenting the proof of Lemma~\ref{lem:regret}, we establish some preliminary results.

\subsection{Preliminary results}

Let $X_T = \sigma_w+\max_{1 \le t \le T} \| x_t\|$ denote the maximum of the norm of
the state plus the noise standard deviation. 

\begin{lemma}\label{lem:Xq}
  For any $q \ge 1$ and any $T \ge 1$, 
  \[
    \EXP\Big[ \frac{X_T^q}{\sigma_w^q}\Big] \le \mathcal{O} ({(\log T)}^{q/2}).
  \]
\end{lemma}
See Appendix~\ref{app:Xq} for proof.

\begin{lemma} \label{lem:log-bound}
  For any $q \ge 1$, we have 
  \[
    \EXP\Big[ \frac{X_T^q}{\sigma_w^q} \log
    \Big(\frac{X_T^2}{\sigma_w^2}\Big) \Big] \le  \tildeO(1).
  \]
\end{lemma}
See Appendix~\ref{app:log-bound} for proof.

\begin{lemma}\label{lem:Kt}
  The number of episodes is bounded by 
  \[
    K_T \le \mathcal{O}\Bigl(
    \sqrt{(n+m) T \log\left(T \frac{X_T^2}{\sigma_w^2}\right)} \Bigr).
  \]
\end{lemma}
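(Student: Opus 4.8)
The plan is to reuse, essentially verbatim, the episode-counting argument of \cite{ouyang2019posterior}: imposing the floor $T_{\min}$ only strengthens the lower bound $T_k \ge T_{\min}+1$ in \eqref{eq:min-length} and leaves the counting untouched. I would classify each completed episode $k$ by which clause of the stopping rule \eqref{eq:stopping} fired: call $k$ a \emph{length episode} if it stopped because $t - t_k > T_{k-1}$ but not because of the determinant test, and a \emph{determinant episode} otherwise (in particular whenever $\det\Sigma_{t_{k+1}} < \tfrac12\det\Sigma_{t_k}$). Let $n_2$ be the number of determinant episodes up to time $T$. The two steps are: (i) show $n_2 \le \mathcal{O}((n+m)\log(T X_T^2))$; and (ii) show $K_T \le \mathcal{O}(\sqrt{(n_2+1)T})$, which together give the claim.

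For step (i): since $\Sigma_{t+1}^{-1} = \Sigma_t^{-1} + \sigma_w^{-2} z_t z_t^\TRANS \succeq \Sigma_t^{-1}$, the quantity $\det\Sigma_t$ is non-increasing in $t$, so chaining the halving inequalities across the $n_2$ determinant episodes gives $\det\Sigma_T \le 2^{-n_2}\det\Sigma_1$. For a matching lower bound, I would write $\Sigma_t^{-1} = \Sigma_1^{-1} + \sigma_w^{-2}\sum_{s<t} z_s z_s^\TRANS$ and use $\|z_s\|^2 = \|x_s\|^2 + \|u_s\|^2 \le (1+\bar G^2)X_T^2$, where $\bar G := \sup_{\phi\in\Omega_1}\|G(\phi)\|$ is finite by compactness of $\Omega_1$ and continuity of $G(\cdot)$; hence $\TR(\Sigma_t^{-1}) = \mathcal{O}(T X_T^2)$, and the inequality $\det M \le (\TR(M)/(n+m))^{n+m}$ for positive definite $M$ yields $\det\Sigma_T \ge (C\,T X_T^2)^{-(n+m)}$ for a suitable constant $C$. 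Combining the two bounds and taking logarithms gives $n_2 \le \mathcal{O}((n+m)\log(T X_T^2))$.

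For step (ii): a length episode satisfies $T_k = T_{k-1}+1$ exactly, because it stops the first instant that $t - t_k > T_{k-1}$ and that instant already clears the guard $t > t_k + T_{\min}$ (since $T_{k-1}+1 > T_{\min}$ by \eqref{eq:min-length}). Therefore, between two consecutive determinant episodes every episode is a length episode and their lengths grow by exactly one per episode, so a block of $r$ consecutive episodes delimited by determinant episodes consumes at least $1 + 2 + \dots + (r-1) \ge \tfrac12(r-1)^2$ time steps. There are at most $n_2+1$ such blocks, their sizes sum to $K_T$, and since the episodes cover $[1,T]$ the blocks together consume at least $T$ time steps; the Cauchy--Schwarz inequality $\bigl(\sum_i (r_i-1)\bigr)^2 \le (n_2+1)\sum_i(r_i-1)^2$ then gives $T \ge (K_T - n_2 - 1)^2/(2(n_2+1))$, i.e. $K_T \le n_2 + 1 + \sqrt{2(n_2+1)T}$. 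Substituting step (i) and absorbing the lower-order $n_2+1$ term yields $K_T \le \mathcal{O}(\sqrt{(n+m)\,T\log(T X_T^2)})$.

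The whole argument is bookkeeping rather than conceptual; the only points requiring care are the off-by-one effects at the two ends --- the initialization $t_0 = -T_{\min}$, $T_{-1} = T_{\min}$, and the possibly-incomplete final episode at time $T$, both of which perturb the count by lower-order additive terms only --- and checking that the new guard ``$t > t_k + T_{\min}$'' in \eqref{eq:stopping} does not disrupt either step, which it does not, since step (i) uses only the monotonicity of $\det\Sigma_t$ and step (ii) only the recursion \eqref{eq:min-length}, both valid for every $T_{\min}$.
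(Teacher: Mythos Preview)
Your proposal is correct and follows essentially the same route as the paper's proof. The paper packages the same two ingredients as ``macro-episodes'' (your blocks delimited by determinant episodes): it shows $K_T \le \sqrt{2MT}$ via the per-block bound $n_{i+1}-n_i \le \sqrt{2\bar T_i}$ and Cauchy--Schwarz, then bounds the number $M$ of macro-episodes by chaining $\det\Sigma_{t_{n_i}}^{-1} \ge 2\det\Sigma_{t_{n_{i-1}}}^{-1}$ and using the trace bound on $\Sigma_T^{-1}$, deferring those last details to~\cite{ouyang2019posterior}; you spell them out explicitly, and you handle the $T_{\min}$ guard in the same way the paper does, via~\eqref{eq:min-length}.
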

See Appendix~\ref{app:Kt} for proof.

\begin{remark}
  The statement of Lemmas~\ref{lem:Xq} and~\ref{lem:Kt} are the same as that
  of the corresponding lemmas in~\cite{ouyang2019posterior}. The proof of
  Lemma~\ref{lem:Xq} in~\cite{ouyang2019posterior} relied on
  Assumption~\ref{ass:norm}. Since we impose a
  weaker assumption, our proof is more involved. The proof of
  Lemma~\ref{lem:Kt} is similar to the proof of \cite[Lemma~3]{ouyang2019posterior}.
  However, since our TSDE algorithm is different from that in~\cite{ouyang2019posterior},
  some of the details of the proof are different. 
\end{remark}

\subsection{Proof of Lemma~\ref{lem:regret}}\label{app:regret}

We now prove each part of Lemma~\ref{lem:regret} separately.

\subsubsection{Proof of bound on $R_0(T)$}
Following exactly the same argument as the proof
of~\cite[Lemma~5]{ouyang2019posterior}, we can show that
\begin{equation}
  R_0(T) \le \mathcal{O}(\sigma_w^2 \EXP[K_T]).
\end{equation}
Substituting the result of Lemma~\ref{lem:Kt}, we get
\begin{align*}
  R_0(T) &\le \mathcal{O}\Bigl(\sigma_w^2  \EXP\Bigl[ 
  \sqrt{(n+m)T \log(T X_T^2 / \sigma_w^2)} \Bigr] \Bigr)
  \\
  &\stackrel{(a)}\le \mathcal{O}\Bigl(\sigma_w^2  \sqrt{(n+m) T \log(T \EXP[ X_T^2 / \sigma_w^2
  ]}) \Bigr)\\
  &\stackrel{(b)}\le \tildeO\bigl(\sigma_w^2  \sqrt{(n+m) T} \bigr)
\end{align*}
where $(a)$ follows from  Jensen's inequality and $(b)$ follows from
Lemma~\ref{lem:Xq}.

\subsubsection{Proof of bound on $R_1(T)$}
Following exactly the same argument as in the proof
of~\cite[Lemma~6]{ouyang2019posterior}, we can show that
\begin{equation}\label{eq:R1-interim}
  R_1(T) \le \mathcal{O}( \EXP[ K_T X_T^2 ] )
\end{equation}
Substituting the result of Lemma~\ref{lem:Kt}, we get
\begin{equation}\label{eq:R1}
  R_1(T) \le \mathcal{O}\Bigl(
    \sqrt{(n+m)T}\, \EXP\bigl[ X_T^2 \sqrt{\log(T X_T^2 / \sigma_w^2)} \bigr]
  \Bigr)
\end{equation}
Now, consider the term
\begin{align}
  \EXP\bigl[ X_T^2 \sqrt{\log(T X_T^2 / \sigma_w^2)} \bigr] &\stackrel{(a)}\le
  \sqrt{ \EXP\bigl[ X_T^4 \bigr] \EXP\bigl[ \log(T X_T^2/ \sigma_w^2) \bigr]}
  \notag\\
  &\stackrel{(b)}\le
  \sqrt{ \EXP\bigl[ X_T^4 \bigr] \log(T \EXP[ X_T^2/ \sigma_w^2] ) }
  \notag\\
  &\stackrel{(c)}\le
  \tildeO(\sigma_w^2)
  \label{eq:R1:1}
\end{align}
where $(a)$ follows from Cauchy-Schwartz inequality, $(b)$ follows from
Jensen's inequality, and $(c)$ follows from Lemma~\ref{lem:Xq}. 

Substituting~\eqref{eq:R1:1} in~\eqref{eq:R1}, we get the bound on $R_1(T)$.

\subsubsection{Proof of bound on $R_2(T)$}

As in~\cite{ouyang2019posterior}, we can bound the inner summand in $R_2(T)$
as
\[
  \| S(\bar \theta_k)^{0.5} \theta_1^\TRANS z_t \|^2 - 
  \| S(\bar \theta_k)^{0.5} \theta_k^\TRANS z_t \|^2 
  \le 
  \mathcal{O}( X_T \| (\theta_1 - \bar \theta_k)^\TRANS z_t \|).
\]
Therefore, 
\[
  R_2(T) \le \mathcal{O}\biggl(
    \EXP\Bigg[ X_T \sum_{k=1}^{K_T} \sum_{t = t_k}^{t_{k+1} - 1} 
  \| (\theta_1 - \bar \theta_k)^\TRANS z_t \| \biggr] \biggr),
\]
which is same as~\cite[Eq.~(45)]{ouyang2019posterior}. Now, by simplifying
the term inside $\mathcal{O}(\cdot)$ using Cauchy-Schwartz inequality, we get
\begin{align}
  \hskip 2em & \hskip -2em
    \EXP\Bigg[ X_T \sum_{k=1}^{K_T} \sum_{t = t_k}^{t_{k+1} - 1} 
  \| (\theta_1 - \bar \theta_k)^\TRANS z_t \| \biggr]
  \notag \\
  &\le 
  \sqrt{ 
    \EXP\Bigg[ \sum_{k=1}^{K_T} \sum_{t = t_k}^{t_{k+1} - 1} 
  \| \Sigma_{t_k}^{-0.5} (\theta_1 - \bar \theta_k) \|^2 \Biggr]}
  \notag\\
  &\quad \times
    \sqrt{\EXP\Bigg[ \sum_{k=1}^{K_T} \sum_{t = t_k}^{t_{k+1} - 1} 
    X_T^2 \| \Sigma_{t_k}^{0.5} z_t \|^2 \Biggr]}
  \label{eq:two-terms}
\end{align}
Note that~\eqref{eq:two-terms} is slightly different than the simplification
of~\cite[Eq.~(45)]{ouyang2019posterior} using Cauchy-Schwartz inequality presented in~\cite[Eq.~(46)]{ouyang2019posterior}, which used $\Sigma_t$ in each term
in the right hand side instead of $\Sigma_{t_k}$. 

We bound each term of~\eqref{eq:two-terms} separately as follows.
\begin{lemma}\label{lem:term1}
  We have the following inequality
  \begin{align*}
    \hskip 2em & \hskip -2em
    \EXP\Bigg[ \sum_{k=1}^{K_T} \sum_{t = t_k}^{t_{k+1} - 1} 
    \| \Sigma_{t_k}^{-0.5} (\theta_1 - \bar \theta_k) \|^2 \Biggr]
    \notag\\
    &\le 
    \mathcal{O}( n(n+m) (T + \EXP[K_T]))
    \le \mathcal{O}(n(n+m)T).
  \end{align*}
\end{lemma}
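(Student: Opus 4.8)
The plan rests on the observation --- which is precisely why~\eqref{eq:two-terms} is written with $\Sigma_{t_k}$ rather than $\Sigma_t$ --- that the summand $\|\Sigma_{t_k}^{-0.5}(\theta_1 - \bar\theta_k)\|^2$ is constant within episode~$k$. The first step is therefore to evaluate the inner sum: since $\Sigma_{t_k}$, $\bar\theta_k$ and $\theta_1$ do not change for $t_k \le t < t_{k+1}$, and this episode has length $T_k = t_{k+1}-t_k$,
\[
  \sum_{k=1}^{K_T}\sum_{t=t_k}^{t_{k+1}-1}\bigl\|\Sigma_{t_k}^{-0.5}(\theta_1-\bar\theta_k)\bigr\|^2 = \sum_{k=1}^{K_T} T_k \,\bigl\|\Sigma_{t_k}^{-0.5}(\theta_1-\bar\theta_k)\bigr\|^2 .
\]
The length $T_k$ is not measurable with respect to the history $\mathcal{F}_{t_k}$ available at the start of episode~$k$ (it depends on the future noise and on $\bar\theta_k$), so it cannot be pulled out of the conditional expectation directly. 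Instead I would use $T_k \le T_{k-1}+1$ from~\eqref{eq:min-length}: since episode $k-1$ has already ended by time $t_k$, the quantity $T_{k-1}+1$ \emph{is} $\mathcal{F}_{t_k}$-measurable, so
\[
  \EXP\bigl[\, T_k\|\Sigma_{t_k}^{-0.5}(\theta_1-\bar\theta_k)\|^2 \,\big|\, \mathcal{F}_{t_k}\bigr] \le (T_{k-1}+1)\,\EXP\bigl[\, \|\Sigma_{t_k}^{-0.5}(\theta_1-\bar\theta_k)\|^2 \,\big|\, \mathcal{F}_{t_k}\bigr].
\]

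The second step is to bound $\EXP\bigl[\|\Sigma_{t_k}^{-0.5}(\theta_1-\bar\theta_k)\|^2 \mid \mathcal{F}_{t_k}\bigr]$ by $\mathcal{O}(n(n+m))$. Here I would use the defining property of posterior sampling: conditioned on $\mathcal{F}_{t_k}$, the sampled parameter $\bar\theta_k$ and the true parameter $\theta_1$ are i.i.d.\ draws from $\mu_{t_k}$. Writing $\theta_1-\bar\theta_k = (\theta_1-\hat\theta_{t_k}) - (\bar\theta_k - \hat\theta_{t_k})$ and using $\|a-b\|^2 \le 2\|a\|^2 + 2\|b\|^2$, the two resulting terms have equal conditional expectation, namely $2\sum_{i=1}^n \TR\bigl(\Sigma_{t_k}^{-1} C_{t_k}(i)\bigr)$, where $C_{t_k}(i)$ is the covariance of the $i$-th column of $\theta$ under $\mu_{t_k}$. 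Since $\mu_{t_k}$ is the product Gaussian $\prod_{i} \mathcal{N}(\hat\theta_{t_k}(i),\Sigma_{t_k})$ restricted to $\Omega_1$, one has $C_{t_k}(i)\preceq \Sigma_{t_k}$, so $\TR(\Sigma_{t_k}^{-1}C_{t_k}(i)) \le d = n+m$ and the conditional expectation is $\le \mathcal{O}(n(n+m))$; this is the same estimate used in~\cite{ouyang2019posterior}.

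The final step is bookkeeping. Taking total expectations and summing over episodes,
\[
  \EXP\Bigl[\sum_{k=1}^{K_T} T_k\,\|\Sigma_{t_k}^{-0.5}(\theta_1-\bar\theta_k)\|^2\Bigr] \le \mathcal{O}(n(n+m))\,\EXP\Bigl[\sum_{k=1}^{K_T}(T_{k-1}+1)\Bigr].
\]
Because the episodes partition $\{1,\dots,T\}$ up to the constant-length bootstrap episode and a single partial final episode, $\sum_{k=1}^{K_T}T_{k-1} \le T + \mathcal{O}(1)$, hence $\sum_{k=1}^{K_T}(T_{k-1}+1) \le \mathcal{O}(T) + K_T$. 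This gives the bound $\mathcal{O}(n(n+m)(T+\EXP[K_T]))$, and the simplification to $\mathcal{O}(n(n+m)T)$ follows from the trivial bound $\EXP[K_T]\le T$ (or, more sharply, from Lemma~\ref{lem:Kt} together with Lemma~\ref{lem:Xq}, which give $\EXP[K_T] = \tildeO(\sqrt{(n+m)T})$).

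I expect the only genuinely delicate point to be the second step: one must verify that restricting the product Gaussian posterior to the compact set $\Omega_1$ does not inflate the normalized second moment $\TR(\Sigma_{t_k}^{-1}C_{t_k}(i))$ above $d$, uniformly in the episode index $k$ --- i.e.\ that the truncated covariance satisfies $C_{t_k}(i)\preceq\Sigma_{t_k}$. Everything else is the inner-sum evaluation, the measurability remark that makes the replacement $T_k\le T_{k-1}+1$ legitimate inside the conditional expectation, and elementary accounting with~\eqref{eq:min-length}.
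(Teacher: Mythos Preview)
Your proposal is correct and follows essentially the same argument as the paper: collapse the inner sum to $T_k$, replace $T_k$ by the $\sigma(h_{t_k})$-measurable bound $T_{k-1}+1$ via~\eqref{eq:min-length}, bound the conditional second moment by $2n(n+m)$ using the posterior-sampling property, and finish with $\sum_k(T_{k-1}+1)\le T+K_T$ and $K_T\le T$. The paper writes the random-length sum as $\sum_{k=1}^\infty \IND_{\{t_k\le T\}}(\cdots)$ to make the conditioning step rigorous, and states the $2n(n+m)$ bound by asserting that each column of $\Sigma_{t_k}^{-0.5}(\theta_1-\bar\theta_k)$ is the difference of two i.i.d.\ $\mathcal N(0,I)$ vectors --- glossing over precisely the truncation issue you flag as the delicate point.
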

See Appendix~\ref{app:term1} for a proof.
\begin{lemma}\label{lem:term2}
  We have the following inequality
  \[
    \EXP\Bigg[ \sum_{k=1}^{K_T} \sum_{t = t_k}^{t_{k+1} - 1} 
    X_T^2 \| \Sigma_{t_k}^{0.5} z_t \|^2 \Biggr]
    \le \tildeO\bigl( (n+m) \sigma_w^4 \bigr)
  \]
\end{lemma}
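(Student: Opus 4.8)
The plan is to reduce the sum involving $\Sigma_{t_k}$ to the familiar ``elliptical potential'' sum involving $\Sigma_t$, control the latter by a log-determinant telescoping argument, and absorb the resulting polynomial-in-$X_T$ factor using the moment bounds of Lemmas~\ref{lem:Xq} and~\ref{lem:log-bound}. As a preliminary, since $\Omega_1$ is compact and $G(\cdot)$ is continuous, there is a finite $g := \max_{\phi \in \Omega_1}\|G(\phi)\|$, so $\|z_t\|^2 = \|x_t\|^2 + \|u_t\|^2 \le c''\|x_t\|^2 \le c'' X_T^2$ with $c'' := 1 + g^2$.

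The key step is the claim that for every episode $k$ and every $t$ with $t_k \le t < t_{k+1}$ one has $\Sigma_{t_k} \preceq \gamma\,\Sigma_t$, where $\gamma := 2\,(1 + cX_T^2)^{T_{\min}}$ and $c := c''\,\lambda_{\max}(\Sigma_1)/\sigma_w^2$. By~\eqref{eq:Sigma}, $\det\Sigma_{t_k}/\det\Sigma_t = \prod_{s=t_k}^{t-1}\bigl(1 + \sigma_w^{-2} z_s^{\TRANS}\Sigma_s z_s\bigr)$. For $t$ in the warm-up range $t_k < t \le t_k + T_{\min}$ this product has at most $T_{\min}$ factors, each bounded by $1 + \sigma_w^{-2}\lambda_{\max}(\Sigma_1)\|z_s\|^2 \le 1 + cX_T^2$; for $t$ beyond $t_k + T_{\min}$, the stopping rule~\eqref{eq:stopping} guarantees $\det\Sigma_t \ge \tfrac12\det\Sigma_{t_k}$. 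In either case $\det\Sigma_{t_k}/\det\Sigma_t \le \gamma$. Since monotonicity of $\Sigma^{-1}$ gives $\Sigma_t \preceq \Sigma_{t_k}$, every eigenvalue of $\Sigma_{t_k}^{-1/2}\Sigma_t\Sigma_{t_k}^{-1/2}$ is at most $1$ and their product is at least $1/\gamma$, hence each eigenvalue is at least $1/\gamma$; this is precisely $\Sigma_{t_k}\preceq\gamma\Sigma_t$, so $z_t^{\TRANS}\Sigma_{t_k}z_t \le \gamma\, z_t^{\TRANS}\Sigma_t z_t$.

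Since $\gamma$ is independent of $k$ and the episodes are disjoint time intervals whose union is $\{1,\dots,T\}$, summing gives $\sum_{k}\sum_{t=t_k}^{t_{k+1}-1} z_t^{\TRANS}\Sigma_{t_k}z_t \le \gamma \sum_{t=1}^{T} z_t^{\TRANS}\Sigma_t z_t$. For the remaining sum, the telescoping identity $\det\Sigma_{t+1}^{-1} = \det\Sigma_t^{-1}\bigl(1 + \sigma_w^{-2}z_t^{\TRANS}\Sigma_t z_t\bigr)$ together with the elementary inequality $a \le (1+a)\log(1+a)$ applied to $a = \sigma_w^{-2}z_t^{\TRANS}\Sigma_t z_t \le cX_T^2$ yields $\sum_{t=1}^T z_t^{\TRANS}\Sigma_t z_t \le \sigma_w^2(1 + cX_T^2)\,\log\det\!\bigl(\Sigma_1\Sigma_{T+1}^{-1}\bigr)$, and a crude trace/AM--GM bound on $\Sigma_{T+1}^{-1} = \Sigma_1^{-1} + \sigma_w^{-2}\sum_t z_t z_t^{\TRANS}$ gives $\log\det\!\bigl(\Sigma_1\Sigma_{T+1}^{-1}\bigr) \le (n+m)\,\mathcal{O}\!\bigl(\log(TX_T^2/\sigma_w^2)\bigr)$. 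Combining the pieces,
\[
  \EXP\Bigl[ X_T^2\!\sum_k\!\sum_{t=t_k}^{t_{k+1}-1}\! z_t^{\TRANS}\Sigma_{t_k}z_t\Bigr]
  \le \sigma_w^2(n+m)\,\EXP\bigl[\,X_T^2(1 + cX_T^2)^{T_{\min}+1}\,\mathcal{O}(\log(TX_T^2/\sigma_w^2))\,\bigr].
\]
Because $T_{\min}$ is a fixed constant (it need only exceed $T^*_{\min}$ of~\eqref{eq:t-min}, which depends on the system constants but not on $T$), the bracketed quantity is a fixed-degree polynomial in $X_T^2$ times a logarithm; expanding it, the coefficient $c^{j}$ of the $X_T^{2j}$-term carries a factor $\sigma_w^{-2j}$ that exactly cancels the $\sigma_w^{2j}$ coming from $\EXP[X_T^{2j}] = \mathcal{O}(\sigma_w^{2j}\log T)$ (Lemma~\ref{lem:Xq}), and the stray $\log X_T^2$ is handled by Lemma~\ref{lem:log-bound}, so the expectation is $\sigma_w^2\,\tildeO(1)$. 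This gives $\tildeO(\sigma_w^4(n+m))$, as claimed.

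I expect the main obstacle to be the warm-up portion of each episode: because the minimum-length rule can force an episode to continue past the time at which $\det\Sigma_t$ first drops below $\tfrac12\det\Sigma_{t_k}$, the clean inequality $\Sigma_{t_k}\preceq 2\Sigma_t$ used in~\cite{ouyang2019posterior} fails there, and one must instead carry the random blow-up factor $(1+cX_T^2)^{T_{\min}}$; the reason this is harmless---and the reason the modified algorithm insists that $T_{\min}$ be a constant rather than grow with $T$---is precisely that a fixed power of $X_T^2$ has all moments controlled by Lemma~\ref{lem:Xq}. A secondary point needing care is the bookkeeping in the final step to verify that the dependence on $\sigma_w$ is exactly $\sigma_w^4$ and not a higher power.
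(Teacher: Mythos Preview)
Your proposal is correct and follows essentially the same route as the paper's proof: split each episode into the warm-up portion (where $\det\Sigma_t^{-1}/\det\Sigma_{t_k}^{-1}$ is bounded by $(1+cX_T^2)^{T_{\min}}$ via the one-step recursion~\eqref{eq:Sigma}) and the post-warm-up portion (where the second stopping rule gives the factor~$2$), pass to the elliptical-potential sum $\sum_t z_t^{\TRANS}\Sigma_t z_t$, bound that by $(n+m)$ times a logarithm of $TX_T^2$, and finish with Lemmas~\ref{lem:Xq}--\ref{lem:log-bound}. The only cosmetic differences are that the paper cites \cite[Lemma~11]{abbasi2011regret} to get $z_t^{\TRANS}\Sigma_{t_k}z_t \le (\det\Sigma_t^{-1}/\det\Sigma_{t_k}^{-1})\,z_t^{\TRANS}\Sigma_t z_t$ whereas you supply a self-contained eigenvalue argument (your observation that eigenvalues $\le 1$ with product $\ge 1/\gamma$ forces each $\ge 1/\gamma$ is exactly a proof of that lemma), and the paper uses the $\min(1,\cdot)$ form of the elliptical-potential bound from \cite{abbasiyadkori2014bayesian} whereas you use the equivalent $a \le (1+a)\log(1+a)$ version.
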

See Appendix~\ref{app:term2} for a proof.

We get the bound on $R_2(T)$ by substituting the result of
Lemmas~\ref{lem:term1} and~\ref{lem:term2} in~\eqref{eq:two-terms}. 

\section{Discussion and Conclusion}

In this paper, we present a variation of the TSDE algorithm of~\cite{ouyang2019posterior} and show that its Bayesian regret up to time~$T$ is bounded by $\tildeO(\sqrt{T})$ under a milder technical assumption than~\cite{ouyang2019posterior}. 
The result in~\cite{ouyang2019posterior} was derived under the assumption that there exists a $\delta < 1$ such that for any $\theta, \phi \in \Omega_1$, $\| A_\theta + B_\theta G(\phi) \| \le \delta$. For our analysis, we impose a different assumption for the closed loop gain when the system  dynamics are $\theta$ and the controller is chosen according to $\phi$.
We show that the assumption of~\cite{ouyang2019posterior} implies our assumption. Our assumption is also implied by  $\rho( A_\theta + B_\theta G(\phi) ) \le \delta$.

The key technical result in~\cite{ouyang2019posterior} as well as our paper is Lemma~\ref{lem:Xq}, which shows that for any $q \ge 1$, $\EXP[ X_T^q/ \sigma_w^q ] \le \tildeO( \log T)$. The proof argument in both~\cite{ouyang2019posterior} as well as our paper is to show that there is some constant $\alpha_0$ such that $X_T \le \sigma_w + \alpha_0 W_T$. Under the stronger assumption in~\cite{ouyang2019posterior}, one can show that for all $t$, $\| x_{t+1} \| \le \delta \| x_t \| + \| w_t \|$, which directly implies that $X_T \le \sigma_w + W_T/(1 - \delta)$. Under the weaker assumption in this paper, the argument is more subtle. The basic intuition is that in each episode, the system is asymptotically stable and, being a linear system, also exponentially stable (in the sense of Lemma~\ref{lem:bound}). So, if the episode length is sufficiently long, then we can ensure that $\| x_{t_{k+1}} \| \le \beta \| x_{t_k} \| + \bar \alpha W_T$, where $\beta < 1$ and $\bar \alpha$ is a constant. This is sufficient to ensure that $X_T \le \sigma_w + \alpha_0 W_T$ for an appropriately defined $\alpha_0$. 

The fact that each episode must be of length $T_{\min}$ implies that the second triggering condition is not triggered for the first $T_{\min}$ steps in an episode. Therefore, in this interval, the determinant of the covariance can be smaller than half of its value at the beginning of the episode. Consequently, we cannot use the same proof argument as~\cite{ouyang2019posterior} to bound $R_2(T)$ because that proof relied on the fact that for any $t \in \{t_k, \dots, t_{k+1} - 1\}$, $\det \Sigma_{t}^{-1}/ \det \Sigma_{t_k}^{-1} \le 2$. So, we provide a variation of that proof argument, where we use a coarser bound on $\det \Sigma_{t}^{-1}/ \det \Sigma_{t_k}^{-1}$ given by Lemma~\ref{lem:det-ratio}.

We conclude by observing that the milder technical assumption imposed in this paper may not be necessary. Numerical experiments indicate that the regret of the TSDE algorithm shows $\tildeO(\sqrt{T})$ behavior even when the uncertainty set $\Omega_1$ does not satisfy Assumption~\ref{ass:radius} (as was also reported in~\cite{ouyang2019posterior}). This suggests that it might be possible to further relax Assumption~\ref{ass:radius} and still establish an $\tildeO(\sqrt{T})$ regret bound.

\section*{Acknowledgement}

We would like to thank Borna Sayedana for pointing out a mistake in Lemma~9 in a previous draft of this paper. 

\appendices
\section{An auxiliary result}
\begin{lemma}\label{lem:aux}
For any $q>0$, we have 
\begin{equation}
    \EXP\Biggl[\max_{1 \le t \le T} \frac{\| w_t\|^q}{\sigma_w^q}\Biggr] \leq \mathcal{O}({(\log T)^{q/2}})
\end{equation}
\end{lemma}
\begin{proof}
 \def\1{\bar w^*_t}
 \def\0{\bar w^{(q)}_t}
 \def\2{\bar w^{(q)}_{(T)}}
  For ease of notation, define random variables $\0 = \|w_t\|^q/\sigma_w^q$ and $\1 = (\bar w^{(q)}_t)^{2/q}$. Note that $\1$ has a $\chi^2$-distribution with $n$-degrees of freedom. Therefore, $\1$ has a moment generating function $\EXP[e^{s \1}] = (1 - 2s)^{-n/2}$ for $s < 1/2$. Pick a $\lambda \in (0,1/2)$. By Chernoff bound, we have 
  \[
      \PR(\1 > z) \le \frac{\EXP[e^{\lambda \1}]}{e^{\lambda z}} =C_\lambda e^{-\lambda z},
  \]
  where $C_\lambda = (1-2\lambda)^{-n/2}$.
  Therefore, the complementary CDF of $\0$ is bounded by
  \[
      1 - F_{\0}(z) = \PR(\0 > z) = \PR(\1 > z^{q/2}) \le C_\lambda e^{-\lambda z^{q/2}}.
  \]
  
  Now, pick $\lambda' \in (0, \lambda)$ and consider an i.i.d.\@ process $\{\xi_t\}_{t \ge 1}$, where $\xi_t$ has a CDF $F_{\xi_t}(z) = 1 - e^{-\lambda' z^{q/2}}$. We let $\xi_{(T)}$ denote $\max_{1 \le t \le T} \xi_t$ and use a similar notation for $\2$. Note that $\xi_t$ has a Weibull distribution with shape $q/2$. Therefore, (see e.g., \cite[Eq. (3)]{downey1990stochastic})
  \begin{equation}\label{eq:extreme}
    \EXP[\xi_{(T)}] = \mathcal{O}( (\log T)^{q/2}).
  \end{equation}
  
  Now we present a bound on $\EXP[\2]$ in terms of $\EXP[\xi_{(T)}]$.  
  Since $0 < \lambda' < \lambda$, there exists a $z^\circ > 0$ such that for all $z > z^\circ$, 
  $F_{\0}(z) > F_{\xi_t}(z)$. 
  Thus, $\0$ is stochastically dominated by $\xi_t$ in the \emph{weak} stochastic order\footnote{A random variable $x$ is said to be dominated by a random variable $y$ in the weak stochastic order if for all increasing functions $f$ supported sufficiently away from $0$, $\EXP[f(x)] \le \EXP[f(y)]$.}, as defined in~\cite{downey1990stochastic}. Therefore, by \cite[Theorem~2.1]{downey1990stochastic}, there exists a constant $c > 0$ such that $\0$ is stochastically dominated by $c \xi_t$ in the convex order.\footnote{A random variable $x$ is said to be dominated by a random variable $y$ in the convex order if for all increasing and convex functions $f$, $\EXP[f(x)] \le \EXP[f(y)]$.} 
  Consequently, by \cite[Theorem~3.1(1)]{downey1990stochastic} (or \cite[Theorem 2.2(1)]{downey1992orderings}), we have that $\EXP[\2] \le c\EXP[\xi_{(T)}]$. Substituting~\eqref{eq:extreme} establishes the result of the Lemma.
\end{proof}

\section{Proof of Lemma~\ref{lem:Xq}}
\label{app:Xq}
For the ease of notation, let $\bar
\alpha = \alpha/(1 - \delta)$, and $\beta
= \alpha \delta^{T_{\min} +1}$. In addition, define $W_T = \max_{1 \le t
\le T} \| w_t\|$, $\bar X_k = \max_{t_k < t \le t_{k+1}} \| x_t \|$, $Y_k = \|
x_{t_k} \|$, and $H_k = A + B G(\bar \theta_k)$ where $A$ and $B$ are the true parameters.

From the system dynamics under the TSDE algorithm, we know that for any
time $t \in \{ t_k + 1, \dots, t_{k+1} \}$, we have
\[ 
x_t = H_k^{t -t_k} x_{t_k} + \sum_{j = t_k}^{t-1} H_k^{t-1-j}w_j.
\]

Thus, from triangle inequality and Assumption~\ref{ass:type}, we get
\begin{align}
 \|x_t\| &\le \alpha \delta^{t- t_k} Y_k +  \biggl[\sum_{j = t_k}^{t-1}\alpha \delta^{t- 1-j}\biggr]W_T
  \notag \\
  &\le \alpha \delta^{t- t_k} Y_k + 
  \underbrace{\biggl[ \frac{\alpha }{1 - \delta} \biggr]}_{\eqqcolon \bar \alpha} W_T.
  \label{eq:x-norm}
\end{align}

Now at time $t = t_{k+1}$, we have 
\begin{align}
  Y_{k+1} &= \| x_{t_{k+1}}\| \le 
  \alpha \delta^{T_k} Y_k + \bar \alpha W_T.
  \notag\\
  &\le \beta Y_k + \bar \alpha W_T
  \label{eq:Y}
\end{align}
where the second inequality follows from~\eqref{eq:min-length}, which implies
$\alpha \delta^{T_k} \le \alpha \delta^{T_{\min}+1} \eqqcolon
\beta$. From Lemma~\ref{lem:t-min}, $\beta<1$. Recursively expanding~\eqref{eq:Y}, we get
\begin{align}
  Y_k &\le \bar \alpha W_T + \beta \bar \alpha W_T + \dots + \beta^{k-2} \bar
  \alpha W_T 
  \notag \\
  &\le \frac{\bar \alpha}{1-\beta} W_T
  \eqqcolon \bar \beta W_T.
  \label{eq:Y-bound}
\end{align}

Substituting~\eqref{eq:Y-bound} is~\eqref{eq:x-norm}, we get that for any 
$t \in \{ {t_k + 1}, \allowbreak \dots, t_{k+1} \}$, we have
\[
  \| x_t || \le \alpha \delta^{t-t_k} \bar \beta W_T + \bar \alpha W_T
  \le \underbrace{[ \alpha \bar \beta + \bar \alpha ]}_{\eqqcolon \alpha_0} W_T
\]
where in the last inequality, we have used the fact that $\delta \in (0,
1)$. Thus, for any episode~$k$, we have 
\begin{equation*}
  \bar X_k = \max_{t_{k} < t \le t_{k+1}} \| x_t \| \le
  \alpha_0 W_T.
\end{equation*}
Hence,
\[
  X_T  \le \sigma_w + \max\{ \bar X_1, \dots, \bar X_{K_T} \} \le \sigma_w + \alpha_0 W_T.
\]

Therefore, for any $q \ge 1$, we have
\begin{equation}\label{eq:Xq-bound}
  \EXP[ X_T^q ] \le \sum_{p=0}^q \binom{q}{p}\sigma_w^{q - p}\alpha_0^p \EXP[ W_T^p ]
\end{equation}
From Lemma~\ref{lem:aux}, we have that
\[
  \sigma_w^{q - p}\EXP[ W_T^p ] =
  \sigma_w^{q - p}\EXP\Bigl[ \max_{1 \le t \le T} \| w_t\|^p \Bigr]
  \le \sigma_w^q \mathcal{O}( {(\log T)}^{p/2}).
\]
Substituting this is~\eqref{eq:Xq-bound}, we obtain the result of the lemma.

\section{Proof of Lemma~\ref{lem:log-bound}} \label{app:log-bound}
Since log is an increasing function, $\log x \le \log \max(e, x)$ for any $x > 0$. Therefore, 
\begin{align}
  \EXP\Big[ &\frac{X_T^q}{\sigma_w^q} \log
     \Big(\frac{X_T^2}{\sigma_w^2}\Big) \Big] 
    \le 
  \EXP\Big[ \frac{X_T^q}{\sigma_w^q} \log \max(e, X_T^2/\sigma_w^2) \Big]
  \notag \\
  &\le \sqrt{ \EXP\Big[ \frac{X_T^{2q}}{\sigma_w^{2q}} \Big]\; \EXP\Big[ \Big( \log \max(e, X_T^2/\sigma_w^2) \Big)^2 \Big]}
  \label{eq:log-bound-1}
\end{align}
where the last inequality follows from Cauchy-Schwartz inequality.  Since $(\log x)^2$ is concave for $x \ge e$, we can use Jensen's inequality to write
\begin{align}
  \EXP\Big[ \Big( \log \max(e, X_T^2/\sigma_w^2) \Big)^2 \Big] 
  &\le
  \bigl(\log( \EXP[ \max(e, X_T^2/\sigma_w^2) ])\bigr)^2 
  \notag \\
  &\le
  \bigl(
    \log( e + \EXP[ X_T^2 / \sigma_w^2])
  \bigr)^2
  \notag \\
  &\stackrel{(a)}\le
  \bigl(
    \log(e + \mathcal{O}(\log T))
  \bigr)^2
  \notag \\
  &\le \tildeO(1)
  \label{eq:log-bound-2}
\end{align}
where $(a)$ uses Lemma~\ref{lem:Xq}. Substituting~\eqref{eq:log-bound-2} in~\eqref{eq:log-bound-1} and using Lemma~\ref{lem:Xq} for bounding $\EXP[ X_T^{2q} / \sigma_w^{2q}]$, we get
\begin{align*}
  \EXP\Big[ \frac{X_T^q}{\sigma_w^q} \log
     \Big(\frac{X_T^2}{\sigma_w^2}\Big) \Big] 
  &\le \sqrt{ \EXP\Big[ \frac{X_T^{2q}}{\sigma_w^{2q}} \Big]\; \tildeO(1)}
  \le \tildeO(1).
\end{align*}

\section{Proof of Lemma~\ref{lem:Kt}}
\label{app:Kt}

The high-level idea of the proof is same as that
of~\cite[Lemma~3]{ouyang2019posterior}. Define macro episodes with start times
  $t_{n_i}$, $i \in \integers_{> 0}$, where ${n_1} = 1$ and for $i \ge
1$, 
\[
  {n_{i + 1}} = 
     \min \left\{ k > {n_i} 
      \,\middle|\, 
      \det \Sigma_{t_k} < \tfrac 12 \det \Sigma_{t_{k-1}} 
  \right\}.
\]

Thus, a new macro-episode starts whenever an episode ends due to the second
stopping criterion. Let $M$ denote the number of macro-episodes until
time~$T$ and define $n_{M+1} = K_T +1$. Let $\bar T_i$ denote the length of
the $i$-th macro-episode. Within a macro-episode, all but the last episode
must be triggered by the first stopping criterion. Thus, for $k \in \{ n_i,
n_i + 1, \dots, n_{i+1} - 2 \}$, 
\[
  T_k = \max\{ T_{k-1} +1, T_{\min}+1 \} = T_{k-1} + 1
\]
where the last equality follows from~\eqref{eq:min-length}. Hence, by
following exactly the same argument as~\cite{ouyang2019posterior}, we have
\[
  n_{i+1} - n_i \le \sqrt{2 \bar T_i}
\]
and therefore following~\cite[Eq.~(40)]{ouyang2019posterior}, we have
\begin{equation}\label{eq:Kt-bound}
  K_T \le \sqrt{2 MT}
\end{equation}
which is same as~\cite[Eq.~(41)]{ouyang2019posterior}.

Now, observe that
\begin{align}
  \det \Sigma_T^{-1}
  & \stackrel{(a)}\ge
  \det \Sigma_{t_{n_M}}^{-1}
  \stackrel{(b)}\ge
  2 \det \Sigma_{t_{n_{M - 1}}}^{-1}
  \notag \\
  & \ge \cdots \ge 2^{M-1} \det \Sigma_1^{-1},
  \label{eq:det-Sigma}
\end{align}
where $(a)$ follows because $\{ \det \Sigma_t^{-1} \}_{t \ge 1}$ is a
non-decreasing sequence (because $\Sigma_1^{-1} \le \Sigma_2^{-1} \ldots$) and~$(b)$ and subsequent
inequalities follow from the definition of the macro
episode and the second triggering condition.

Then following the same idea as the rest of the proof
in~\cite{ouyang2019posterior}, we get
\begin{equation}\label{eq:M-bound}
  M \le \mathcal{O}( (n+m) \log (T X_T^2 / \sigma_w^2)).
\end{equation}
Substituting~\eqref{eq:M-bound} in~\eqref{eq:Kt-bound}, we obtain the result
of the lemma.

\section{Proof of Lemma~\ref{lem:term1}} \label{app:term1}
Observe that the summand is constant for each episode. Therefore,
\begin{align}
  \hskip 1em & \hskip -1em
  \EXP\biggl[\sum_{k=1}^{K_T} \sum_{t=t_k}^{t_{k+1}-1} \bigl[ 
    \| \Sigma_{t_k}^{-0.5} (\theta_1 - \bar \theta_k)^\TRANS \|^2
\bigr] \biggr]
  \notag \\
  &= \EXP\biggl[\sum_{k=1}^{K_T}  \bigl[ 
    T_{k}
    \| \Sigma_{t_k}^{-0.5} (\theta_1 - \bar \theta_k)^\TRANS \|^2
  \bigr]\biggr]
  \displaybreak[1]
  \notag \\
  &\stackrel{(a)}\le \EXP\biggl[\sum_{k=1}^{K_T}  \bigl[ 
  (T_{k-1}+1)
    \| \Sigma_{t_k}^{-0.5} (\theta_1 - \bar \theta_k)^\TRANS \|^2
  \bigr]\biggr]
  \displaybreak[1]
  \notag \\
  &= \sum_{k=1}^\infty \EXP \bigl[ \IND_{\{t_k \le T\}} 
    (T_{k-1}+1)
    \| \Sigma_{t_k}^{-0.5} (\theta_1 - \bar \theta_k)^\TRANS \|^2
  \bigr]
  \displaybreak[1]
  \notag\\
  &= \sum_{k=1}^\infty \EXP \Bigl[ \EXP \bigl[ \IND_{\{t_k \le T\}} 
      (T_{k-1}+1)
    \| \Sigma_{t_k}^{-0.5} (\theta_1 - \bar \theta_k)^\TRANS \|^2
   \bigm| h_{t_k} \bigr]\Bigr]
  \displaybreak[1]
   \notag \\
   &\stackrel{(b)}=
   \sum_{k=1}^\infty \EXP \Bigl[ \IND_{\{t_k \le T\}} 
     (T_{k-1}+1)\EXP \bigl[ 
    \| \Sigma_{t_k}^{-0.5} (\theta_1 - \bar \theta_k)^\TRANS \|^2
   \bigm| h_{t_k} \bigr]\Bigr]
  \displaybreak[1]
   \notag \\
   &\stackrel{(c)}\le
   \sum_{k=1}^\infty \EXP \bigl[ \IND_{\{t_k \le T\}} 
   (T_{k-1}+1) 2(n+m)n \bigr]
   \notag \\
   &\le
   2(n + m)n (T  + \EXP[K_T]),
   \label{eq:pf-term1}
\end{align}
where $(a)$ follows from~\eqref{eq:min-length}, $(b)$ follows from the fact that
$\IND_{\{t_k < T\}}(T_{k-1}+1)$ is $\sigma(h_{t_k})$ measurable, and
$(c)$ hold because conditioned on $h_{t_k}$ each column of
$\| \Sigma_{t_k}^{-0.5} (\theta_1 - \bar \theta_k)^\TRANS \|^2$ is the
difference of two i.i.d.\@ vectors $\sim \mathcal{N}(0,I)$. 

Eq.~\eqref{eq:pf-term1} proves the first part of the Lemma. The second part
follows from the fact that $K_T \le T$.

\section{Proof of Lemma~\ref{lem:term2}}\label{app:term2}
For any $s < t$. 
Eq.~\eqref{eq:Sigma} implies that $\Sigma_s^{-1}
\preceq \Sigma_t^{-1}$ and consequently 
$\Sigma_t^{-1}$ is positive definite. Therefore, we have the following:
\begin{lemma}\label{lem:eval}
  Let $\lambda_{\min}$ be the smallest eigenvalue of $\Sigma_1^{-1}$. Then, each
  eigenvalue of $\Sigma_t^{-1}$ is no less than $\lambda_{\min}$. Therefore,
  each eigenvalue of $\Sigma_t$ is no more than $1/\lambda_{\min}$.
\end{lemma}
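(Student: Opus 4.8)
The plan is to read off everything from the precision-matrix recursion~\eqref{eq:Sigma} together with the monotonicity of the smallest eigenvalue under the Loewner order. First I would iterate~\eqref{eq:Sigma}: since $z_s z_s^\TRANS \succeq 0$ for every $s$, we have
\[
  \Sigma_t^{-1} = \Sigma_1^{-1} + \frac{1}{\sigma_w^2}\sum_{s=1}^{t-1} z_s z_s^\TRANS
  \succeq \Sigma_1^{-1},
\]
so that $\Sigma_t^{-1} - \Sigma_1^{-1}$ is positive semidefinite for all $t \ge 1$ (this is already noted in the opening sentence of Appendix~\ref{app:term2}).

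Next I would pass from this Loewner inequality to an inequality on the extreme eigenvalue via the Courant--Fischer (min--max) characterization. For any unit vector $v \in \reals^d$,
\[
  v^\TRANS \Sigma_t^{-1} v \;=\; v^\TRANS \Sigma_1^{-1} v + \frac{1}{\sigma_w^2}\sum_{s=1}^{t-1} (z_s^\TRANS v)^2
  \;\ge\; v^\TRANS \Sigma_1^{-1} v \;\ge\; \lambda_{\min},
\]
and minimizing the left-hand side over all unit vectors gives $\lambda_{\min}(\Sigma_t^{-1}) = \min_{\|v\|=1} v^\TRANS \Sigma_t^{-1} v \ge \lambda_{\min}$. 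This establishes the first claim. The second claim is then immediate from the spectral mapping for inverses: $\Sigma_t^{-1}$ is symmetric positive definite, so its eigenvalues are the reciprocals of those of $\Sigma_t$; if $\mu$ is an eigenvalue of $\Sigma_t$ then $1/\mu$ is an eigenvalue of $\Sigma_t^{-1}$, whence $1/\mu \ge \lambda_{\min}$ and therefore $\mu \le 1/\lambda_{\min}$.

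There is essentially no hard step here — the statement is a bookkeeping consequence of~\eqref{eq:Sigma}. The only point that warrants a line of care is that $\Sigma_t^{-1}$ and $\Sigma_1^{-1}$ need not commute, so one cannot argue eigenvalue-by-eigenvalue directly; the min--max variational formula is the clean way to transfer the Loewner bound $\Sigma_t^{-1} \succeq \Sigma_1^{-1}$ to the bound on $\lambda_{\min}(\Sigma_t^{-1})$, and positive definiteness of $\Sigma_1^{-1}$ (hence $\lambda_{\min} > 0$) is what makes the reciprocal in the second claim well defined.
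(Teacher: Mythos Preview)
Your proposal is correct and follows essentially the same reasoning as the paper: the paper does not give a separate proof but simply notes, just before stating the lemma, that \eqref{eq:Sigma} implies $\Sigma_s^{-1} \preceq \Sigma_t^{-1}$ (and hence positive definiteness of $\Sigma_t^{-1}$), after which the eigenvalue claims are treated as immediate. Your write-up merely makes explicit the Courant--Fischer step and the spectral-mapping step that the paper leaves implicit.
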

An immediate implication of Lemma~\ref{lem:eval} is the following: For any $t$
and $s$, 
\begin{equation}\label{eq:quad-form-bound}
  z_t^\TRANS \Sigma_s z_t \le \frac{1}{\lambda_{\min}} \| z_t \|^2
  \le \frac{1}{\lambda_{\min}} M^2_G X_T^2,
\end{equation}
where $M_G = \sup_{\theta \in \Omega_1} \|[I, G(\theta)^\TRANS]^\TRANS \|$. 

For any $s < t$, $\Sigma_s^{-1} \preceq \Sigma_t^{-1}$ implies that $\Sigma_s
\succeq \Sigma_t$. Therefore, from \cite[Lemma~11]{abbasi2011regret}, we get
that for any $V \neq 0$ (of appropriate dimensions), 
\begin{equation}\label{eq:det-bound}
  \frac{ \| V^\TRANS \Sigma_s V\| }
       { \| V^\TRANS \Sigma_t V \| }
  \le \frac{ \det \Sigma_s }{ \det \Sigma_t }
  = \frac{ \det \Sigma_t^{-1} }{ \det \Sigma_s^{-1}}.
\end{equation}

Eq.~\eqref{eq:det-bound} implies that for any $t \in \{t_k, \dots, t_{k+1} -
1\}$, we have
\begin{equation}
    \| \Sigma_{t_k}^{0.5} z_t \|^2 
    =
    z_t^\TRANS \Sigma_{t_k} z_t 
    \le 
    \frac{\det \Sigma_t^{-1}}{\det \Sigma_{t_k}^{-1} }
    z_t^\TRANS \Sigma_{t} z_t 
    \label{eq:term2-bound}
\end{equation}

For the ease of notation, let $\tau_k = t_k + T_{\min}$. Then we have the
following bound on $\det \Sigma_{t}^{-1}/\det \Sigma_{t_k}^{-1}$.
\begin{lemma}\label{lem:det-ratio}

  The following inequalities hold:
  \begin{enumerate}
    \item For  $t \in \{ t_k, \dots, \tau_k\}$, we have
      \[
        \frac{\det \Sigma_t^{-1}}{\det \Sigma_{t_k}^{-1} }
        \le
        \biggl( 1 + \frac{1}{\lambda_{\min} \sigma_w^2} M^2_G X_T^2 \biggr)^{T_{\min}}.
      \]
    \item For $t \in \{\tau_k + 1, \dots, t_{k+1} -1 \}$, we have
      \[
        \frac{\det \Sigma_t^{-1}}{\det \Sigma_{t_k}^{-1} }
        \le
        2.
      \]
  \end{enumerate}
  
Consequently,  for  all $t \in \{ t_k, \dots, t_{k+1}-1\}$, we have
    \begin{equation}\label{eq:det-bound-1}
        \frac{\det \Sigma_t^{-1}}{\det \Sigma_{t_k}^{-1} }
        \le
        \biggl(2 + \frac{M^2_G X_T^2 }{\lambda_{\min} \sigma_w^2} \biggr)^{T_{\min} \vee 1}.
    \end{equation}
\end{lemma}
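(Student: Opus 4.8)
The plan is to turn the determinant ratio into a telescoping product of one-step ratios. From the covariance update~\eqref{eq:Sigma} we have $\Sigma_{t+1}^{-1} = \Sigma_t^{-1} + \sigma_w^{-2} z_t z_t^\TRANS$, so applying the matrix determinant lemma with $A = \Sigma_t^{-1}$ gives
\[
  \frac{\det \Sigma_{t+1}^{-1}}{\det \Sigma_t^{-1}}
  = 1 + \frac{1}{\sigma_w^2}\, z_t^\TRANS \Sigma_t z_t .
\]
By~\eqref{eq:quad-form-bound}, which combines Lemma~\ref{lem:eval} with $\| z_t \| \le M_G X_T$, every one-step factor is sandwiched as $1 \le \det\Sigma_{t+1}^{-1}/\det\Sigma_t^{-1} \le 1 + M_G^2 X_T^2 /(\lambda_{\min}\sigma_w^2)$.

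For part~1, I would write $\det\Sigma_t^{-1}/\det\Sigma_{t_k}^{-1} = \prod_{s=t_k}^{t-1}\bigl(\det\Sigma_{s+1}^{-1}/\det\Sigma_s^{-1}\bigr)$; for $t \in \{t_k,\dots,\tau_k\}$ this product has $t - t_k \le T_{\min}$ factors, each at most $1 + M_G^2 X_T^2/(\lambda_{\min}\sigma_w^2)$ (which is $\ge 1$), giving the claimed bound. For part~2, I would invoke the stopping rule~\eqref{eq:stopping}: if $t \in \{\tau_k + 1,\dots,t_{k+1}-1\}$ then $t_k + T_{\min} < t < t_{k+1}$, so episode~$k$ was not terminated at time~$t$ and in particular the second triggering condition did not hold, i.e.\ $\det\Sigma_t \ge \tfrac12 \det\Sigma_{t_k}$, which is exactly $\det\Sigma_t^{-1}/\det\Sigma_{t_k}^{-1} \le 2$.

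For the consequence~\eqref{eq:det-bound-1}, I would note that~\eqref{eq:min-length} gives $t_{k+1} \ge t_k + T_{\min}+1 = \tau_k + 1$, so $\{t_k,\dots,t_{k+1}-1\}$ is exactly the union of the two index ranges treated above. Writing $c = M_G^2 X_T^2/(\lambda_{\min}\sigma_w^2)$, part~1 gives $(1+c)^{T_{\min}} \le (2+c)^{T_{\min}\vee 1}$ and part~2 gives $2 \le 2+c \le (2+c)^{T_{\min}\vee 1}$, which establishes~\eqref{eq:det-bound-1} in both cases. I do not anticipate a genuine obstacle; the only point requiring care is the exponent $T_{\min}\vee 1$, which is there precisely so the bound still dominates the factor $2$ from part~2 in the degenerate case $T_{\min}=0$, where part~1 contributes only the trivial factor $1$. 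The rest is the elementary determinant identity above and bookkeeping of the index ranges.
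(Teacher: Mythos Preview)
Your proposal is correct and follows essentially the same route as the paper: the one-step determinant ratio $\det\Sigma_{t+1}^{-1}/\det\Sigma_t^{-1} = 1 + \sigma_w^{-2} z_t^\TRANS \Sigma_t z_t$ bounded via~\eqref{eq:quad-form-bound} and telescoped for part~1, the second stopping criterion for part~2, and the same elementary comparison of $(1+c)^{T_{\min}}$ and $2$ against $(2+c)^{T_{\min}\vee 1}$ for the consequence. The only cosmetic difference is that the paper obtains the one-step ratio by factoring $\Sigma_{t+1}^{-1} = \Sigma_t^{-1}(I + \sigma_w^{-2}\Sigma_t z_t z_t^\TRANS)$ rather than quoting the matrix determinant lemma directly.
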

\begin{proof}
  The second relationship follows from the second stopping criterion. We now
  prove the first relationship. Eq.~\eqref{eq:Sigma} implies that
  \[
     \Sigma_{t+1}^{-1} = \Sigma_t^{-1}
     \biggl( I + \frac{1}{\sigma_w^2} \Sigma_t z_t z_t^\TRANS \biggr).
  \]
  Therefore, 
  \begin{align}
    \frac{\det \Sigma_{t+1}^{-1}}{\det \Sigma_{t}^{-1} }
    &= \det \biggl( I + \frac{1}{\sigma_w^2} \Sigma_t z_t z_t^\TRANS \biggr)
    = 1 + \frac{1}{\sigma_w^2} z_t^\TRANS \Sigma_t z_t
    \notag \\
    &\le 1 + \frac{1}{\lambda_{\min} \sigma_w^2 } M^2_G X_T^2,
    \label{eq:det-ratio1}
  \end{align}
  where the last inequality follows from~\eqref{eq:quad-form-bound}.
  Thus, for any $t \in \{t_k, \dots, \tau_k\}$, we have
  \begin{align}
    \frac{\det \Sigma_t^{-1}}{\det \Sigma_{t_k}^{-1} }
   & \le \biggl( 1 + \frac{1}{\lambda_{\min} \sigma_w^2} M^2_G X_T^2 \biggr)^{t - t_k} \notag \\
    &\le \biggl( 1 + \frac{1}{\lambda_{\min} \sigma_w^2} M^2_G X_T^2 \biggr)^{T_{\min}}
  \end{align}
  where the first inequality follows by repeatedly applying~\eqref{eq:det-ratio1}
  as a telescopic product.

  Let $\bar M = M_G^2 X_T^2/\lambda_{\min} \sigma_w^2$. Then, \eqref{eq:det-bound-1} follows by observing that $(1 + \bar M)^{T_{\min}} \le (2 + \bar M)^{T_{\min} \vee 1}$ and $2 < (2 + \bar M)^{T_{\min} \vee 1}$.
\end{proof}

Using Lemma~\ref{lem:det-ratio} and \eqref{eq:term2-bound}, we get
\begin{align}
  \sum_{t={t_k}}^{t_{k+1} - 1} \| \Sigma_{t_k}^{0.5} z_t\|^2 
  &\le
\sum_{t={t_k}}^{t_{k+1} - 1}  \frac{\det \Sigma_t^{-1}}{\det \Sigma_{t_k}^{-1} } 
z_t^\TRANS \Sigma_{t} z_t 
  \notag \\
  &\le  \biggl(2 + \frac{M^2_G X_T^2 }{\lambda_{\min} \sigma_w^2}  \biggr)^{T_{\min} \vee 1}
  \sum_{t={t_k}}^{t_{k+1} - 1} 
z_t^\TRANS \Sigma_{t} z_t 
\end{align}
where the first inequality follows from~\eqref{eq:term2-bound} and the second inequality follows
from Lemma~\ref{lem:det-ratio}.
Therefore,
\begin{align}
  \hskip 2em & \hskip -2em
  \sum_{k=1}^{K_T} \sum_{t={t_k}}^{t_{k+1} - 1} X_T^2 \| \Sigma_{t_k}^{0.5} z_t\|^2 
  \notag \\
  &\le
  \biggl(2 + \frac{M^2_G X_T^2 }{\lambda_{\min} \sigma_w^2} \biggr)^{T_{\min} \vee 1}  X_T^2 
  \sum_{t=1}^{T}  z_t^\TRANS \Sigma_{t} z_t 
  \label{eq:R2-pf-split}
\end{align}

From~\eqref{eq:quad-form-bound} for $s=t$, we get that
\begin{equation}\label{eq:max-min-bound}
  z_t^\TRANS \Sigma_t z_t \le
  \max\biggl(\sigma_w^2, \frac{M_G^2 X_T^2}{\lambda_{\min}}\biggr)
  \min\biggl(1, \frac{z_t^\TRANS \Sigma_t z_t}{\sigma_w^2}\biggr).
\end{equation}

Hence
\begin{equation} \label{eq:term2:2}
  \sum_{t=1}^T z_t^\TRANS \Sigma_t z_t \le
  \biggl(\sigma_w^2  + \frac{M_G^2 X_T^2}{\lambda_{\min}}\biggr)
  \sum_{t=1}^T 
  \min\biggl(1, \frac{z_t^\TRANS \Sigma_t z_t}{\sigma_w^2}\biggr)
\end{equation}
Using~\eqref{eq:Sigma} and the intermediate step of the proof
of~\cite[Lemma 6]{abbasiyadkori2014bayesian}, we have
\begin{align}
  \hskip 1em & \hskip -1em 
  \sum_{t=1}^T
  \min\biggl(1, \frac{z_t^\TRANS \Sigma_t z_t}{\sigma_w^2} \biggr) 
  =  \sum_{t=1}^T \min\biggl(1, \biggl\lVert\frac{\Sigma^{0.5}_tz_tz_t^\TRANS \Sigma^{0.5}_t}{\sigma_w^2}\biggr\rVert \biggr) \notag \\
  &\le 2(n + m) \log \Bigg( \frac{ \TR( \Sigma_{T+1}^{-1 }) }{ (n+m) } \Biggr) -
  \log\det \Sigma^{-1}_1.
  \label{eq:R2-pf-AY}
\end{align}
Now, from~\eqref{eq:Sigma}, we get that
\begin{align}
  \TR(\Sigma_{T+1}^{-1}) &= \TR(\Sigma_1^{-1}) + 
  \sum_{t=1}^{T} \frac{1}{\sigma_w^2} \TR(z_t z_t^\TRANS)
  \notag\\
  &\le \TR(\Sigma_1^{-1}) + 
   \frac{T}{\sigma_w^2} M_G^2 X_T^2,
  \label{eq:tr-bound}
\end{align}
where the last inequality uses the fact that $\TR(z_t z_t^\TRANS)=
\TR(z_t^\TRANS z_t) = \| z_t\|^2 \le M_G^2 X_T^2$.
Combining~\eqref{eq:term2:2} with~\eqref{eq:R2-pf-AY} and~\eqref{eq:tr-bound},  we get
\begin{equation}
  \sum_{t=1}^T z_t^\TRANS \Sigma_t z_t \le
  \mathcal{O}\bigl(
    (n+m)(\sigma_w^2 + X_T^2) \log( T X_T^2 / \sigma_w^2) 
  \bigr).
  \label{eq:zz-bound}
\end{equation}
Therefore, we can bound the expectation of the right hand side of~\eqref{eq:R2-pf-split} as
\begin{align}
  \hskip 2em & \hskip -2em
    \EXP\biggl[ 
 \biggl(2 + \frac{M^2_G X_T^2 }{\lambda_{\min} \sigma_w^2} \biggr)^{T_{\min} \vee 1} X_T^2
  \sum_{t=1}^{T}  z_t^\TRANS \Sigma_{t} z_t  
  \biggr]
    \notag\\
    &\le \mathcal{O}\bigl( 
       \sigma_w^4 (n+m)
        \EXP[ F(X_T) ]
    \bigr)
    \notag\\
    &\le \tildeO(\sigma_w^4(n+m)),
    \label{eq:term2:4}
\end{align}
where the first inequality follows from~\eqref{eq:zz-bound} with $F(X_T) =  \big(2 + \frac{M^2_G X_T^2 }{\lambda_{\min} \sigma_w^2} \big)^{T_{\min} \vee 1} (\frac{X_T^2}{\sigma_w^2} + \frac{X_T^4}{\sigma_w^4}) \log( T X_T^2/ \sigma_w^2) $, 
and the last inequality follows from Lemma~\ref{lem:log-bound} by noting that $F(X_T)$ is a polynomial of $X_T/\sigma_w$ multiplied by a poly-log term.

The result follows from \eqref{eq:R2-pf-split} and \eqref{eq:term2:4}.

\end{document}